\def\BibTeX{{\rm B\kern-.05em{\sc i\kern-.025em b}\kern-.08em
    T\kern-.1667em\lower.7ex\hbox{E}\kern-.125emX}}
\newtheorem{lemma}{Lemma}
\newtheorem{thm}{Theorem}
\newtheorem{cor}{Corollary}
\newtheorem{remark}{Remark}
\pgfplotsset{compat=1.16}
\newcommand{\mc}{\mathcal}
\newcommand{\ol}{\overline}
\newcommand{\ul}{\underline}
\newtheorem{definition}{Definition}
\newtheorem{example}{Example}
\begin{document}
          
\title{Safe Schedule Verification for Urban Air Mobility Networks with Node Closures}
\author{Qinshuang Wei, \IEEEmembership{Student Member, IEEE}, Gustav Nilsson \IEEEmembership{Member, IEEE}, and Samuel Coogan, \IEEEmembership{Member, IEEE}
\thanks{Qinshuang Wei and Samuel Coogan are with the School of Electrical and Computer Engineering, Georgia Institute of Technology, Atlanta, 30332, USA. {\tt\small \{qinshuang, sam.coogan\}@gatech.edu}. S.~ Coogan is also with the School of Civil and Environmental Engineering, Georgia Institute of Technology.  }
\thanks{Gustav  Nilsson is with the School of Architecture, Civil and Environmental Engineering, École Polytechnique Fédérale de Lausanne (EPFL), 1015 Lausanne, Switzerland. {\tt\small gustav.nilsson@epfl.ch}}
\thanks{This work was partially supported by the NASA University Leadership Initiative (ULI) under grant number 80NSSC20M0161 and by the National Science Foundation under grant number 1749357.}
\thanks{Some preliminary results presented in this paper have been submitted to NecSys 22~\cite{Wei2022}.}
}

\maketitle

\begin{abstract}
In Urban Air Mobility (UAM) networks, takeoff and landing sites, called vertiports, are likely to experience intermittent closures due to, e.g., adverse weather. To ensure safety, all in-flight Urban Air Vehicles (UAVs) in a UAM network must therefore have alternative landing sites with sufficient landing capacity in the event of a vertiport closure. In this paper, we study the problem of safety verification of UAM schedules in the face of vertiport closures. We first provide necessary and sufficient conditions for a given UAM schedule to be safe in the sense that, if a vertiport closure occurs, then all UAVs will be able to safely land at a backup landing site. Next, we convert these conditions to an efficient algorithm for verifying safety of a UAM schedule via a linear program by using properties of totally unimodular matrices. Our algorithm allows for uncertain travel time between UAM vertiports and scales quadratically with the number of scheduled UAVs. We demonstrate our algorithm on a UAM network with up to 1,000 UAVs.
\end{abstract}

\begin{IEEEkeywords}
Safety Verification, Transportation Network, Urban Air Mobility,
\end{IEEEkeywords}

\section{Introduction}
\label{sec:intro}
Urban airspace is promising for transportating people and goods in cities and surrounding regions to avoid ground transportation congestion. 
Both commercial mobility-on-demand operators~\cite{2016uber} and government-sponsored research institutes such as NASA~\cite{thipphavong2018urban} are actively involved in developing such urban air mobility (UAM) solutions. Safety and efficiency of the urban air vehicles (UAVs) are major concerns in all UAM solutions~\cite{balakrishnan2018blueprint, 2014NextGen,2018landscape, inrix, al2020factors}. The work~\cite{al2020factors} observes that safety is one of the key factors affecting the adoption of UAM, while~\cite{balakrishnan2018blueprint,  2014NextGen,2018landscape} provide guidelines for safely integrating the UAVs into the existing airspace. 
The paper~\cite{inrix} provides insight into the improvement of commute efficiency with usage of urban airspace compared to ground transportation.
Proposed UAM solutions cover a wide range of possibilities such as allowing UAVs to land at \emph{vertistops} or \emph{vertiports} installed on roofs of existing buildings or within cloverleaf exchanges on freeways. In addition, a growing number of simulation tools have been developed to study large-scale interactions of UAVs~\cite{bosson2018simulation, xue2018fe3, 2019spark}. 

Unforeseen disruptions such as intermittent closure of landing sites due to, e.g., extreme weather conditions must be considered for any UAM solution~\cite{balakrishnan2018blueprint}. In particular, a key safety constraint is to ensure that a backup landing spot is available for all in-flight UAVs. In this paper, we model a UAM network as a graph with nodes that are finite-capacity vertiports and links that are transportation links between vertiports. A key feature of our model is the allowance of uncertain travel time between vertiports represented as an interval of possible travel times. Flights depart from origin nodes at a scheduled departure time and visit one or more vertiports along a route through the UAM graph. When a vehicle arrives at a vertiport, it occupies one of a finite number of landing spots for a fixed ground service time to, e.g, offload and load passengers. In this framework, the defining feature of safety is that a landing spot must always be available when the UAV arrives at the vertiport. The fact that travel times are uncertain adds to the complexity of the safety problem. In~\cite{wei2021scheduling}, we considered the problem of scheduling flight departures to ensure arrival at final destinations before prescribed deadlines while ensuring safety with respect to landing capacity throughout the network, but did not consider any vertiport closures which is the focus here.

In this paper, we assume given a schedule that is \emph{a priori} nominally safe obtained via, e.g., the methodology proposed in~\cite{wei2021scheduling}. Given such a schedule, the goal is to ensure that it remains safe even if a vertiport closes and in-flight UAVs must be rerouted. We assume that each link in the UAM network posseses a set of backup nodes such that any flight on that link that is inbound for a closed vertiport must be safely rerouted to one of those nodes at the moment of closure with the restriction that landing capacity is not exceeded for any node within the network. 

Our main contributions are as follows. First, we present necessary and sufficient conditions for ensuring safety in the event of a vertiport closure, i.e., for ensuring that all in-flight UAVs are able to land at a backup vertiport without exceeding landing spot capacity constraints. These conditions ensure safety for any realization of the link travel times, which are uncertain and only assumed to lie between known lower and upper bounds. We therefore refer to these conditions as worst-case safety guarantees. Second, we present an efficient algorithm for checking whether a schedule satisfies the theoretical necessary and sufficient conditions for worst-case safety. This algorithm leverages the theory of totally unimodular matrices to losslessly convert a mixed integer program into a linear program, enabling scalability to schedules with large numbers of UAVs. In particular, the proposed algorithm scales quadratically with the number of scheduled flights. Third, we present necessary and sufficient conditions for safety under some realization of the travel times. We refer to this as best-case safety, in contrast to worst-case safety which must be safe for all travel time realizations. These conditions, for example, could help a UAM operator determine if a schedule could be rendered safe by reducing travel time uncertainty. We demonstrate our results on several examples. This paper extends our prior work in~\cite{Wei2022} which only allowed for one backup node for each link in the network. Extending to multiple backup nodes is a significant generalization requiring the theory of totally unimodular matrices for an efficient algorithm that allows for checking a much larger class of safe schedules.

Safety of UAM scheduling has been explored in prior work such as~\cite{ancel2017real}, which presents a risk assessment framework to provide real-time safety evaluation where the risk of off-nominal conditions in a UAV is assessed by calculating the potential impact area and the effects of the impact to people on the ground. 

In ground transportation settings, most of the disruptions in the network can be modeled as capacity reductions, where totally disabled roads have zero capacity. The challenge is then to reroute the vehicle flows to ensure resilient operation of the network, where the flows are often assumed to be continuous quantities in the network~\cite{comopartI, comopartII}.

In this regard, our analysis is closer to classical airspace operation, where disruptions have previously been modeled and investigated to enable efficient recovery plans after the perturbations. Much of the existing literature focuses on generating a new recovery schedule~\cite{love2002disruption,zhu2005disruption,andersson2004flight,bisaillon2011large,wu2017rapid,thengvall2001multiple,yan1997airline}, rerouting aircrafts~\cite{arguello1997grasp,arguello1997framework,rosenberger2003rerouting,wu2017solving,wu2017solving2,li2013solving}, or  are integrated with recovering crew schedules~\cite{aguiar2011operational,zhang2015two,zhu2014constraint,nissen2006duty,vink2020dynamic,chen2020multiobjective} while minimizing a cost related to deviation to original schedules, available resources, and other system constraints. Other literature considers airport 
closures as disruptions~\cite{li2013solving,yan1997airline,thengvall2001multiple}. However, these works do not consider the capacity constraints of the airports, as needed here for the vertiports. Moreover, the present paper views the scheduling problem as a hard safety constraint rather than from the perspective of efficient operation.

The remainder of the paper is organized as follows: In Section~\ref{sec:problem}, we first define the UAM network model followed by the disruption model that reduces capacity of the network. 
We then establish safety criteria and develop necessary and sufficient conditions for a schedule to be safe under disruptions in Section~\ref{sec:thms}. We then develop an efficient algorithm to check that a schedule satisfies these conditions using the theory of totally unimodular matrices. In Section~\ref{sec:case_study}, we demonstrate our safety verification algorithm on a UAM network. The paper is concluded with some ideas for future work.

\section{Problem Formulation}
\label{sec:problem}
\subsection{Network Model and Nominal Scheduling}
We model an urban air mobility (UAM) network with a directed graph $\mc{G} = (\mc{V},\mc E)$, where $\mc V$ is the set of nodes and $\mc E$ is the set of links for the network. Nodes are physical landing sites for the UAVs, sometimes called \emph{vertistops} or \emph{vertiports}. Links are corridors of airspace connecting nodes. Each node $v \in \mc V$ has capacity $C_v \in \mathbb{N}_0$, that is, there are $C_v$ \emph{landing spots} at node $v$ where each landing spot allows at most one UAV to stay at any time. We denote the vector of capacities $C=\{C_v\}_{v\in\mathcal{V}}$.
 
We define $\tau: \mc E \rightarrow \mc V$ and $\sigma: \mc E \rightarrow \mc V$ so that for all $e=(v_1, v_2) \in \mc E$ where $v_1,v_2 \in \mc V$, $\tau(e) = v_1$ is the tail of link $e$ and $\sigma(e) = v_2$ is the head of link $e$. Let $S\subseteq \mathcal{V}$ (resp., $T\subseteq \mathcal{V}$) be the set of nodes that are not the head (resp., tail) of any link, $S=\{v\in\mathcal{V}\mid \sigma(e)\neq v\ \forall e\in\mathcal{E}\}$ and $T=\{v\in\mathcal{V}\mid \tau(e)\neq v\ \forall e\in\mathcal{E}\}$. We assume $S\cap T=\emptyset$.

A \emph{route} $R$ is a sequence of connected links. Denote the number of links in route $R$ by $k_R$ and enumerate the links in the route  $1^R,2^R,\ldots,k_R^R$ and the nodes in the route $0^R,1^R,\ldots,k^R_R$. To avoid cumbersome notation, we use $\ell^R$ to denote both a link and its head node along a route, i.e., $\ell^R=\sigma(\ell^R)$ for all $\ell\in\{1,\dots,k_R\}$; the intended meaning will always be clear from context. Thus the route links and nodes are enumerated so that $0^R=\tau(1^R)$ is the origin node, $k_R^R$ is the destination node, and $\sigma(\ell^R) = \tau((\ell+1)^R)$ for all $\ell\in\{1,\dots,k_R\}$ ensures the sequence is connected. Further, when the route $R$ is clear from context, we drop the superscript-$R$ notation. We denote the set of nodes that $R$ travels through as $V(R)$.
We assume that, due to operational reasons, the UAVs are only allowed to travel along a set of routes $\mc R$, and $0^R \in S$ and $k^R_R \in T$ for all $R \in \mc R$.

Since, in reality, the travel time depends on external factors such as weather conditions or a vehicle's operational capability, we assume that the travel time for each link is not exact, but rather bounded by a time interval. 
For each link $e \in \mc E$, let $\ol{x}_e$ and $\ul{x}_e$ with $\ol{x}_e \geq \ul{x}_e>0$ denote the maximum travel time and minimum travel time, respectively, for the link, and let $\ul{x}\in \mathbb{R}_+^\mathcal{E}$ and $\ol{x}\in \mathbb{R}_+^\mathcal{E}$ be the corresponding aggregated vectors. Once a UAV has landed at any node, it is assumed to block a landing spot for a fixed ground service time $w \in \mathbb{R}_+$. For ease of notation, we assume the ground service time is uniform at all nodes, but this assumption is straightforward to relax. 

\smallskip

\begin{definition}[UAM Network]
A \emph{UAM network} $\mc N$ is a tuple $\mc N = (\mc G, C, \mc R, \ul{x}, \ol{x},w)$ where $\mc G, C,\mc R, \ul{x}, \ol{x}, w$ are the network graph, node capacities, routes, minimum and maximum link travel times, and ground service time as defined above.
\end{definition}

To model the schedule of UAV flights in a UAM network $\mc N = (\mc G, C, \mc R, \ul{x}, \ol{x},w)$, we assume that every flight is associated to a route $R\in \mc R$ and stops at intermediate nodes along the route. Therefore, a \emph{schedule} is a pair $(R,\delta)$ where $R \in \mc R$ and $\delta \in \mathbb{R}_+$ is the appointed departure time from the first node along the route. A schedule profile for a UAM network is a set $\mc S = \{(R_j,\delta_j)\}_{j\in {\mc J}}$ where ${\mc J}$ is a finite index set of \emph{flights}.

For safety reasons, it is assumed that a UAV must be able to land immediately upon arrival at any node along its route. 
For flight $j$ with schedule $(R_j,\delta_j)$, for any link $\ell$ along route $R_j$, the latest arrival time at node $\sigma(\ell)$ along the route is denoted $a^{j}_{\ell}$ and given by
\begin{equation}
\label{eq:a_aggregate}
    a^{j}_{\ell} = \delta_j + \sum_{k=1}^{\ell} \ol{x}_{k}  + (\ell-1)w\,,
\end{equation}
i.e., $a^{j}_{\ell}$ is the departure time from node $0$ plus the upper bound of the time interval it takes to travel through the links $\{1,2,\dots,\ell\}$ with the time spent at each intermediate node.
Further, the time interval that the flight will potentially block a landing spot at node $\ell$ is given by
\begin{equation} \textstyle
\label{eq:M_span}
\mc M_{\ell}^j = \left[ \delta_{j} + \sum_{k=1}^{\ell} \ul{x}_{k}+ (\ell-1)w, \, a^{j}_{\ell} + w \right] .
\end{equation}
We let $\mc M_{v}^j = \mc M_{\ell}^j$ and $a^j_v = a^{j}_{\ell}$ if $v \in V(R_j)$ and $v = \ell^{R_j}$. 

\medskip
\begin{definition}[Feasible Schedule]
\label{def:schedule}
A set of schedules $\mc {S}=\{(R_j,\delta_j)\}_{j\in \mc{J}}$ where $\delta_j\in\mathbb{R}_+$ for all $j \in \mc J$ is a \emph{feasible schedule} if the number of vehicles at a node never exceeds capacity, i.e., for all $v\in{\mc V}$ and all $t \geq 0$,
\begin{equation}
    \sum_{j:v \in V(R_j)}\mathbf{1}\left(t;\mc M_{v}^j\right)\leq C_v
\end{equation}
where the notation $\mathbf{1}(\cdot;\cdot)$ is an indicator such that $\mathbf{1}(t;[a,b])=1$ if $t\in [a,b]$ and $\mathbf{1}(t;[a,b])=0$ otherwise.
\end{definition}
Since the time intervals defined by \eqref{eq:a_aggregate} and \eqref{eq:M_span} consider lower and upper bounds on the uncertain travel time, the definition of feasibility accommodates all possible travel times satisfying these lower and upper bounds, motivating the next definition.
\begin{definition}[Realization]
A \emph{realization} of a scheduled flight is a realization of the travel times such that the flight departs at the given departure time and has a fixed travel time along each link that falls within the given time interval for the link. While each realization of the same flight has the same departure time, different realizations generally have different travel times on at least one link due to uncertain travel times.
\end{definition}
A feasible schedule ensures that node capacity is not exceeded for any realization of scheduled flights. Every feasible schedule will by definition ensure proper operation of the UAM network under normal circumstances. Our goal in this paper is to check whether the schedule is further resilient to interruptions in the network.

\subsection{Disruption Model} 

In actual operation, it is expected that unforeseen disruptions that disable a node, such as adverse weather conditions, will be common. Flights affected by the disabled node 
must have a rerouting plan that ensures availability of a landing spot. In this paper, we postulate the existence of a set of \emph{backup nodes} for the network so that when any node is disabled, the flights can be redirected to a backup node depending on the link they are traveling through.

In this subsection, we introduce the assignment of the backup nodes and the operating mechanism once a node is disabled. We consider that only one node may be disabled at a time. In order to guarantee that each disrupted flight will be able to be assigned to a node after the disruption, we assign a set of backup nodes $\mc B_e$ to each link $e$ in the network. The assignment of backup nodes can be based on some rules, e.g., distances between nodes. 
We make a natural assumption that the set of backup nodes for any link  includes its tail node and head node, i.e., $\tau(e), \sigma(e) \in \mc B_e$ for any $e \in \mc E$. 
Then, a flight traveling on some link $e$ whose route is potentially blocked by a node closure will continue to the head node $\sigma(e)$ on its route if that node is functioning, or reroute to one of its backup nodes if the head node is disabled.

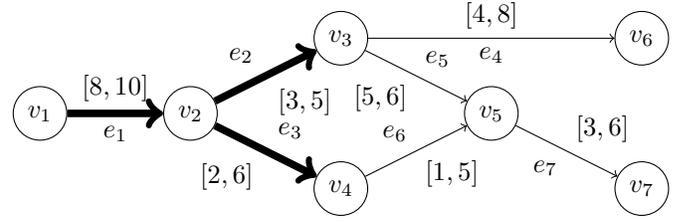
\begin{figure}
    \centering    
    \begin{tikzpicture}
    
    \node[draw, circle] (1) at (-1, 2) {$v_1$};
    \node[draw, circle] (2) at (1, 2) {$v_2$};
    
    \node[draw, circle] (3) at (3, 3) {$v_3$};
    \node[draw, circle] (4) at (3, 1) {$v_4$};
    \node[draw, circle] (5) at (5, 2) {$v_5$};
    
    \node[draw, circle] (6) at (7, 3) {$v_6$};
    \node[draw, circle] (7) at (7, 1) {$v_7$};

    \draw[->, line width=1mm] (1) -- node[above] {$[8,10]$} node[below]{$e_1$} (2);
    \draw[->, line width=1mm] (2) -- node[below right] {$[3,5]$} node[above left]{$e_2$} (3);
    \draw[->, line width=1mm] (2) -- node[below left] {$[2,6]$} node[above right]{$e_3$} (4);
    \draw[->] (3) -- node[above] {$[4,8]$} node[below]{$e_4$} (6);
    \draw[->] (3) -- node[below left] {$[5,6]$} node[above right]{$e_5$} (5);
    \draw[->] (4) -- node[below right] {$[1,5]$} node[above left]{$e_6$} (5);
    \draw[->] (5) -- node[above right] {$[3,6]$} node[below left]{$e_7$} (7);

    \end{tikzpicture}
    \caption{(Bold partial graph) The sub-graph consisting of the bold lined $4$ nodes and $3$ links is used to illustrate the simple network in Example~\ref{ex:safety}.\\
    (Entire graph) The entire graph is used to illustrate the network with $7$ nodes and $7$ links in Example~\ref{ex:case} and the case study.}
    \label{fig:case_study_net}
\end{figure}

\begin{example}
\label{ex:case}
Consider Fig.~\ref{fig:case_study_net} (the entire graph) with 7 nodes and 7 links,  $\mc V = \{v_1,v_2, \dots, v_7\}$ and $\mc E = \{e_1 = (v_1,v_2), e_2 = (v_2, v_3),e_3 = (v_2, v_4), e_4 = (v_3,v_6), e_5 = (v_3,v_5), e_6 = (v_4,v_5),e_7 = (v_5, v_7) \}$. The set of all possible origins (resp., destinations) is $S = \{v_1\}$ (resp., $T = \{ v_6, v_7\}$). We assume the origin $v_1$ does not have a capacity constraint, while $C_{v_2} = 8$, $C_{v_3} = 6$, $C_{v_4} = 4$, $C_{v_5} = 5$, $C_{v_6} = 3$ and $C_{v_7} = 5$. The links are indicated in the figure and the corresponding travel time intervals are labeled beside the links, e.g., the interval $[8,10]$ above the link $e_1$ means that the shortest (resp., longest) possible time for traveling through the link is $8$ (resp., $10$) time units. We consider three routes $\mc R = \{R^1, R^2, R^3\}$ with $R^1 = \{e_1, e_2, e_4\}$, $R^2 = \{e_1, e_2, e_5, e_7\}$ and $R^3 = \{e_1,e_3, e_6, e_7\}$. 
Each UAV remains at the verti-stops along its path for $w = 1$ time unit after landing.

Table~\ref{tb:backup} shows a possible assignment of backup nodes for the network. The first column represents the link $e \in \mc E$, the second column is the set of backup nodes assigned to the corresponding link, while the third column shows the node (or nodes) that the UAV on the link can be rerouted to if node $v_5$ is disabled. For example, although  link $e_1$ is not directly affected by the closure of node $v_5$, some flights using this link have a route that passes through $v_5$ and will therefore land at the head node $v_2$ and remain there due to the closure of $v_5$. A similar explanation holds for rows 2 and 3 of the table. 
Flights traveling on links $e_4$ and $(v_5,v_7)$ are not affected if $v_5$ fails, hence the corresponding entries in the third column are empty. Lastly, flights traveling on links $e_5$ and $e_6$ must instead route to one of the backup nodes as indicated.
\end{example}

\begin{table}
\centering
\caption{Backup nodes for each link in Example~\ref{ex:case}}
\label{tb:backup}
\begin{tabular}{ccc}
Link ($e\in \mc E$)  & $\mc B_e$ & Possible backup nodes\\
    &   & when $v_5$ is disabled \\\hline
$e_1 = (v_1,v_2)$ & $\{v_1,v_2\}$    &   $v_2$\\
$e_2 = (v_2, v_3)$& $\{v_2,v_3,v_4\}$    &   $v_3$\\
$e_3 = (v_2, v_4)$& $\{v_2,v_3,v_4\}$     &   $v_4$\\
$e_4 = (v_3,v_6)$& $\{v_3,v_5,v_6\}$     &   Not affected \\
$e_5 = (v_3,v_5)$& $\{v_3,v_4,v_5\}$    &   $v_3$, $v_4$\\
$e_6 = (v_4,v_5)$& $\{v_3,v_4,v_5,v_7\}$    &   $v_3$, $v_4$, $v_7$\\
$e_7 = (v_5,v_7)$ & $\{v_4,v_5,v_7\}$     &   Not affected\\ \hline
\end{tabular}
\end{table}



A realization of the $j$'th flight is \emph{affected} by some disabled node $v_c \in \mc V$ at time $t_c$ if $v_c \in V(R_j)$, i.e., the route of the flight travels through node $v_c$, and the flight has not yet reached $v_c$ by time $t_c$. The realization of the $j$-th flight is \emph{not affected} when node $v_c$ is disabled at time $t_c$ otherwise. The $j$'th flight is \emph{possibly affected} by disabling node $v_c$ at time $t_c$ if $v_c \in V(R_j)$ and $t_c < \sup{\mc M^j_{v_c}}$, i.e., the flight may have to travel through the disabled node later than $t_c$ and hence is affected for some realization of travel times.

Below is a set of natural rules that all flights are assumed to follow once a node $v_c$ is disabled at time $t_c$:
\begin{enumerate}
    \item flights not affected will continue normal operation;
    \item any affected flight that has not yet departed ($\delta_j > t_c$) will be canceled (no longer depart);
    \item an affected flight $j$ with $\delta_j \leq t_c$ traveling on a link $e \in \mc E$ with $\sigma(e) \neq v_c$ will continue to the head node $\sigma(e)$ and stop there indefinitely (block the landing spot indefinitely);
    \item an affected flight $j$ with $\delta_j \leq t_c$ that is temporarily stopped at a node at time $t_c$ will remain there indefinitely;
    \item  an affected flight $j$ with $\delta_j \leq t_c$ traveling on a link $e \in \mc E$ with $\sigma(e) = v_c$ will 
    be rerouted to one of the other backup nodes of the current link in $\mc B_e\backslash v_c$ and stop there indefinitely.
\end{enumerate}
Note that we do not consider the problem of recovering a new schedule after a disabled node becomes operational again, as our focus is on safety. Further, we postulate the above rules as to provide a well-defined problem formulation; alternative rules might be also plausible.

\section{Necessary and Sufficient Conditions for Safe Schedules}
\label{sec:thms}
In this section, we formally define safety and present sufficient and necessary conditions for verification of safety under different criteria. 

Given a network $\mc N = (\mc G, C, \mc R, \ul{x}, \ol{x}, w)$ where $\mc G = (\mc V, \mc E)$ and a feasible schedule $\mc S = \{(R_j,\delta_j)\}_{j \in \mc J}$, 
the closure of a node can affect the set of schedules in different ways. In particular, the set of schedules is:
\begin{enumerate}
    \item \emph{worst-case (resp., best-case) time-node conditionally safe} for node $v_c$ and time $t_c$ if, supposing that $v_c$ is disabled at time $t_c$, then all possibly affected flights are able to land at their designated backup nodes while not interfering with any unaffected flights, for all (resp., for some) realization of link travel times.
    \item \emph{worst-case (resp., best-case) node conditionally safe} for node $v_c$ if it is worst-case (resp., best-case) time-node conditionally safe for node $v_c$ for all time $t_c \geq 0$.
    \item  \ \emph{worst-case (resp. best-case) 1-closure safe} if it is worst-case (resp. best-case) node conditionally safe for any node $v_c \in \mc V$.
\end{enumerate}
 
Note that worst-case safety implies best-case safety.

    
%


\begin{example}
\label{ex:safety}
We illustrate the safety criteria through the simple network shown in Fig.~\ref{fig:case_study_net} with $4$ nodes and $3$ links (the bold lined sub-graph). For this example, the set of nodes $\mc V = \{v_1,v_2, v_3, v_4\}$ and the set of links $\mc E = \{e_1,e_2,e_3 \}$. We assume that the origin $v_1$ does not have a capacity constraint, while $C_{v_2} = 2$, $C_{v_3} = 1$ and $C_{v_4} = 1$. The links are indicated in the figure and the corresponding travel time intervals are labeled beside the links. We consider two possible routes $R^1 = \{e_1,e_2\}$ and $R^2 = \{e_1,e_3\}$. Each flight remains at the intermediate nodes or destination along its path for $w = 1$ time unit after landing. 
The backup nodes for each link are $\mc B_{e_1} = \{v_1,v_2\}$, $\mc B_{e_2} = \{v_2,v_3,v_4\}$ and $\mc B_{e_3} = \{v_2,v_3,v_4\}$.
Consider a feasible schedule $\mc S = \{S_1,S_2,S_3\}$, where $S_1 = (R^2,1)$, $S_2 = (R^2,8)$ and $S_3 = (R^1,\delta_3)$ where we consider several possibilites for $\delta_3$. Assume $v_4$ is disabled at time $t_c=15$. Based on the rerouting rules for the flights,  then at time $t_c$, a flight traveling on link $e_1$ will be rerouted to node $v_2$, while a flight traveling on link $e_3$ can be rerouted to either $v_2$ or $v_3$. Though it is possible that flight $S_1$ has already complete its journey by time $t_c=15$, in the worst case where we consider any link that it may be traveling on, it is possible for flight $S_1$ to be traveling on $e_3$ and need to be rerouted to $v_2$ or $v_3$, so that we have to reserve a landing spot at node $v_2$ or $v_3$ for $S_1$; flight $S_2$ must be traveling on $e_1$ and needs to stay at $v_2$ upon arrival. If $\delta_3 = 0$, then flight $S_3$ is not affected and should continue its journey; however, if $\delta_3 = 10$, then either $v_2$ or $v_3$ will have insufficient landing spots, since the flight $S_1$ must have been rerouted to either $v_2$ or $v_3$ upon the arrival of $S_3$. Therefore, $\mc S$ is worst-case time-node conditionally safe for node $v_4$ at time $15$ if and only if $\delta_3 \leq 4$. In contrast, it is always best-case time-node conditionally safe for node $v_4$ at time $15$ regardless of the choice of $\delta_3$.

Now, suppose $\delta_3 = 10$ and $C_{v_2} = 3$. Then there is sufficient capacity so that the network will be able to accommodate all rerouted flights after closure no matter when node $v_4$ is closed. Therefore, we see that $\mc S$ is worst-case node conditionally safe for node $v_4$ in this case. We further check that this is true for all nodes in the network, and thus $\mc S$ is also worst-case 1-closure safe. In contrast, suppose $C_{v_3} = 2$ while $C_{v_2} = 2$, then $\mc S$ is worst-case time-node conditionally safe for node $v_4$ and $t_c = 15$ with any choice of $\delta_3$, 
but is worst-case node-conditionally safe for $v_4$ if and only if $\delta_3 \geq 4$.
\qed\end{example}

To obtain constraints for 1-closure safety, we start by observing that a feasible schedule is trivially node conditionally safe for any node $v \in S$, where we recall $S$ the set of source nodes that are not the head of any link. Whenever a node $v\in S$ is disabled, there will not be any UAV traveling toward node $v$ while no future journey will depart from $v$.

We next explore safety of a disabled node that is not a source node. 
There are several special sets we now define before presenting conditions for 1-closure safety when disabling a node $v_c \in \mc V \backslash S$.
If $v_c = \ell^{R_j}$, we let $\ell_{v_c,R_j} = \ell$, and $\ell_{v_c,R_j} = 0$ if $v_c \notin V(R_j)$. 

We let the set of links with head $v \in \mc V$ be 
\begin{equation}
    \label{eq:links_affected}
    \mc E_v := \{e\in \mc E \mid \sigma(e) = v\}\, ,
\end{equation}
and we let $\mc B_{e,v_c}$ be the set of nodes that any flight traveling on link $e$ can be rerouted to if $v_c$ is disabled: 
\begin{equation}
    \mc B_{e,v_c} =\begin{cases}
    \{\sigma(e)\} & \text{if } \sigma(e)\neq v_c \,, \\
     \mc B_e\backslash v_c & \text{if } \sigma(e)= v_c \,.
    \end{cases}
\end{equation}
$B_{e,v_c}$ is the set of \emph{possible backup nodes} for link $e$ when $v_c$ is disabled.  
We then denote $b_{e,v_c}$ as the node that a flight traveling on link $e$ will be rerouted to if $v_c$ is disabled, so that $b_{e,v_c} \in \mc B_{e,v_c}$.

We denote the set of links on which flights will possibly be rerouted to node $v$ when $v_c$ is disabled as $\mc B_{v,v_c}$, which includes the links with head node as $v$ when $v \neq v_c$ and the links with head node as $v_c$ 
whose backup nodes include $v$, i.e.,
\begin{align}
     \nonumber \mc B_{v,v_c} :=& \{e \in \mc E \mid \sigma(e) =v, \sigma(e) \neq v_c  \}  \\
    \label{eq:backup2v}   &\cup \{e \in \mc E \mid  v\in \mc B_e \backslash v_c , \sigma(e) = v_c \}   \,.
\end{align}

We then define the set $\mc E^j_{v_c}$ as the links along the route of flight $j$ whose head node is one of the backup nodes of link $\ell^{R_j}_{v_c,R_j}$, i.e., 
\begin{equation}
    \mc E^j_{v_c} = \{\sigma(\ell^{R_j}) \in \mc B_{e,v_c} \mid \ell < \ell_{v_c,R_j}\} \, .
\end{equation}

We define the set $\mc J_v$ as the index set of the flights with routes passing through node $v$,
\begin{equation}
    \label{eq:j2v}
    \mc J_v := \{j \in \mc J \mid v \in V(R_j)\} \, ,
\end{equation}
and we further define the index set of the journeys that might possibly land at $v$ after time $t$ as
\begin{equation}
    \label{eq:Jstar}
    \mc J^*_v(t) := \{j\in \mc J_v \mid a^j_v+w > t\} \, .
\end{equation}
Therefore, the index set of the possibly affected flights when node $v_c$ is closed at time $t_c$ is $\mc J^*_{v_c}(t_c)$, while the index set for the flights passing through node $v$ that are not possibly affected when node $v_c$ is closed at $t_c$ is $\mc J_{p}(v,t_c,v_c) := \mc J_v \backslash \mc J^*_{v_c}(t_c)$.

We use $\mc J_{c,v_c}$ to represent the set of indices for canceled journeys with departure time greater than the node-disabling time $t_c$:
\begin{equation}
    \label{eq:Jcancel}
    \mc J_{c,v_c}(t_c) := \{j\in \mc J_{v_c} \mid \delta_j>t_c \} \, .
\end{equation}
We then define the index set of rerouting flights $\mc J^{*\backslash c}_{v_c}(t_c)$ as the possibly affected flights not canceled when node $v_c$ is disabled at time $t_c$, i.e.,
\begin{equation}
    \mc J^{*\backslash c}_{v_c}(t_c) :=\mc J^*_{v_c}(t_c)\backslash \mc J_{c,v_c}(t_c) \, .
\end{equation}

We let $N_R(v,t_c,v_c)$ be the maximal number of flights that are possible to land at node $v$ at the same time once node $v_c$ is disabled at time $t_c$, which can be computed as 
\begin{equation}
\label{eq:N_R}
    N_R(v,t_c,v_c) = \sup_{t\geq t_c} \sum_{j\in \mc J_{p}(v,t_c,v_c)}\mathbf{1}(t;\mc M^j_v ) \, .
\end{equation}
All of the above components \eqref{eq:links_affected}--\eqref{eq:N_R} are easily computed from a given feasible schedule. In the rest of the paper, we sometimes drop the arguments in the parentheses, $t, t_c, v, v_c$, when they are clear from the context.

\subsection{Necessary and Sufficient Condition for Worst-Case Safe Schedules}
\begin{thm}
\label{thm:sufficient}
Consider a network $\mc N = (\mc G,C,\mc R,\ul{x}, \ol{x},w)$, where $\mc G =  (\mc V, \mc E)$ and given backup nodes assignment $\mc B_e$ for all $e\in \mc E$. Assume given a feasible schedule $\mc S =\{(R_j,\delta_j)\}_{j\in \mc J}$. 

The schedule $\mc S$ is worst-case time-node conditionally safe for node $v_c$ and time $t_c$ if and only if there exists an integer set $\{N_{e,v}(t_c,v_c)\}_{e \in \mc E,v \in \mc V}$ that satisfies the following constraints for all $v \in \mc V$ and $e \in \mc E$:
\begin{align}
\label{eq:sufficient_v} C_v- \sum_{e\in \mc B_{v,v_c}} N_{e,v}(t_c,v_c) &\geq  N_R(v,t_c,v_c) \,, \quad \forall v\in \mc V \,, \\
\nonumber \sum_{v \in \mc{B}_{e,v_c}} N_{e,v}(t_c,v_c) &=\\
\nonumber \sum_{j \in \mc J^{*\backslash c}_{v_c}(t_c)}\mathbf{1}\Big(&t_c;[L^j_{e}, U^j_{e}]  \backslash \{\cup_{\ell \in \mc E^j_{v_c}}[L^j_{\ell^{R_j}}, U^j_{\ell^{R_j}}]\}\Big) \,, \\
\label{eq:sufficient_e} &\qquad \qquad \forall e\in \mc E_{v_c} \,, \\
\nonumber N_{e,\sigma(e)}(t_c,v_c) & = \sum_{j\in \mc J^{*\backslash c}_{v_c}(t_c)} \mathbf{1}(t_c;[L^j_{e}, U^j_{e}]) \,, \\  
\label{eq:N_e_fix} &\qquad \qquad \forall e\notin \mc E_{v_c} \,,  \\
\label{eq:N_e_0} N_{e,v}(t_c,v_c) &= 0 \,, \quad \forall e\in \mc E_{v_c}, v \notin \mc B_{e,v_c} \,, \\
\label{eq:N_e_var}  N_{e,v}(t_c,v_c) &\geq 0 \,, \quad \forall e\in \mc E_{v_c}, v \in \mc B_{e,v_c} \,,
\end{align}
where for all $j\in \mc J$, the lower and upper bounds of the time interval are defined as
\begin{align}
  \label{eq:lowerI}
  L^j_{e} = \begin{cases}
                  \inf\{\mc M^j_{\tau(e)}\}+w &\text{if }\tau(e) \neq 0^{R_j}\\
                  \delta_j & \text{if } \tau(e) = 0^{R_j} \, ,
\end{cases}
\end{align}
and 
\begin{align}
  \label{eq:upperI}
  U^j_{e} = \begin{cases}
                  \sup\{\mc M^j_{\sigma(e)}\} &\text{if }\sigma(e) \neq v_c\\
                  \sup\{\mc M^j_{\sigma(e)}\}-w & \text{if } \sigma(e) = v_c \, .
\end{cases}
\end{align}

Further, $\mc S$ is worst-case node-conditionally safe for node~$v_c$ if and only if such a set $\{N_{e,v}(t_c,v_c)\}_{e \in \mc E,v \in \mc V}$ satisfying \eqref{eq:sufficient_v}--\eqref{eq:N_e_var} exists for the finite number of times $t_c$ where the values of $N_R(v,t_c,v_c)$ and the time-varying index sets $\mc J^*_{v_c}(t_c)$, $\mc J_{c,v_c}(t_c)$ possibly change, i.e., at both endpoints of the interval $\mc M_v^j$ for all $v \in \mc V$, at times $\delta_j$ for all $j \in \mc J$, and at times $L^j_{e}$, $U^j_{e}$ for all $j \in \mc J$ and $e\in \mc E$.
\end{thm}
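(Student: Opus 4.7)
My plan is to prove the iff by interpreting $N_{e,v}(t_c, v_c)$ as the worst-case number of flights from $\mc J^{*\backslash c}_{v_c}(t_c)$ that come to rest at node $v$ by way of link $e$ after the closure of $v_c$. Under this interpretation, inequality \eqref{eq:sufficient_v} becomes the capacity requirement at $v$: the unaffected occupancy at $v$ is bounded by $N_R(v, t_c, v_c)$ per \eqref{eq:N_R}, while affected flights ending at $v$ after the closure arrive via some link in $\mc B_{v, v_c}$. Equalities \eqref{eq:sufficient_e} and \eqref{eq:N_e_fix} are link-wise conservation laws counting flights actually on link $e$ at time $t_c$, and \eqref{eq:N_e_0}--\eqref{eq:N_e_var} encode that a flight on link $e \in \mc E_{v_c}$ can only be routed to nodes in $\mc B_{e, v_c}$.

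For the sufficiency direction, I would fix an arbitrary realization of travel times and use a given integer family $\{N_{e,v}\}$ to exhibit a safe rerouting. Under such a realization, each flight $j \in \mc J^{*\backslash c}_{v_c}(t_c)$ is at time $t_c$ either traveling on one specific link $e$ or already stopped at one specific node. The indicator $\mathbf{1}(t_c; [L^j_e, U^j_e])$ equals $1$ exactly when some realization places flight $j$ on link $e$ at time $t_c$, and the set-difference $[L^j_e, U^j_e] \setminus \cup_{\ell \in \mc E^j_{v_c}}[L^j_{\ell^{R_j}}, U^j_{\ell^{R_j}}]$ in \eqref{eq:sufficient_e} removes precisely the sub-intervals during which $j$ would already be stopped at an earlier backup node on its route (and thus already accounted for by \eqref{eq:N_e_fix}). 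Hence the right-hand side of \eqref{eq:sufficient_e} upper bounds the number of flights needing backup assignment on $e$ across any realization, and I can assign these consistently with $\{N_{e,v}\}$; combined with the unaffected-flight bound at each $v$, the inequality \eqref{eq:sufficient_v} guarantees capacity is not exceeded.

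For the necessity direction, I would argue contrapositively via a worst-case realization. If the schedule is worst-case time-node conditionally safe, then in particular for the adversarial choice of travel times that simultaneously maximizes the relevant interval indicators a feasible rerouting exists; setting $N_{e,v}$ to be the number of flights routed from link $e$ to node $v$ in that worst case yields an integer family satisfying \eqref{eq:sufficient_v}--\eqref{eq:N_e_var} by construction. For the extension to worst-case node-conditional safety, I would observe that each quantity appearing in the constraints, namely $N_R(v, t_c, v_c)$, the index sets $\mc J^*_{v_c}(t_c)$ and $\mc J_{c, v_c}(t_c)$, and every indicator $\mathbf{1}(t_c; [L^j_e, U^j_e])$, is piecewise constant in $t_c$ with discontinuities only at the enumerated breakpoints. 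Therefore feasibility of the system \eqref{eq:sufficient_v}--\eqref{eq:N_e_var} is itself piecewise constant in $t_c$, and it suffices to check one representative per interval, which can be taken to be the listed breakpoints.

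The main obstacle will be justifying that the set-subtraction term in \eqref{eq:sufficient_e} captures exactly the flights still in need of backup assignment on link $e$ without double counting with \eqref{eq:N_e_fix}, and, relatedly, that a single integer family $\{N_{e,v}\}$ certifies safety simultaneously across all realizations even though the adversarial realization depends a priori on the node $v$ under consideration. Resolving this requires showing that the worst-case counts on the right-hand sides of \eqref{eq:sufficient_e} and \eqref{eq:N_e_fix} are achievable by a common realization (or, equivalently, that it is never advantageous for the adversary to couple choices across links), which I expect to follow from the fact that each flight contributes independently to the indicators along the disjoint portion of its own route.
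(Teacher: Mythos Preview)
Your proposal is correct and follows essentially the same approach as the paper: interpret $N_{e,v}$ as the number of possibly affected flights proceeding to $v$ via link $e$, read \eqref{eq:sufficient_v} as the node capacity constraint, \eqref{eq:sufficient_e}--\eqref{eq:N_e_fix} as link-wise counts of flights possibly on $e$ at $t_c$, explain the set-difference in \eqref{eq:sufficient_e} as avoiding double-counting with earlier links whose heads are already backup nodes, and for the second part invoke piecewise constancy of all data in $t_c$. The paper's proof is in fact somewhat less structured than yours---it does not separate sufficiency and necessity explicitly and does not flag the ``single $\{N_{e,v}\}$ across all realizations'' subtlety you raise in your final paragraph---so your outline is, if anything, a more careful rendering of the same argument.
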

The second part of Theorem \ref{thm:sufficient} states that, while the definition for a schedule to be worst-case node-conditionally safe requires checking safety for all times $t_c\geq 0$, such conditions in fact only need to be checked at a finite number of times.
\begin{proof}
The schedule is worst-case time-node conditionally safe for node $v_c$ and time $t_c$ if and only if, for any possibly affected flight that is not canceled and may be rerouted to some node in $\mc V$ at time $t_c$, an available landing spot needs to be reserved. Hence the problem becomes to ensure the flights surely not affected will have no capacity conflict with any possibly rerouted flights. We then consider the maximum (worst-case) occupation of the node in $\mc V$.

We let the set $\{N_{e,v}(t_c,v_c)\}_{e \in \mc E, v \in \mc V}$ be the set of variables that denote the number of possibly affected flights that may proceed to node $v \in \mc V$ when traveling on the link $e \in \mc E$. Hence, for all $v \in \mc V, e \in \mc E$, $N_{e,v}(t_c,v_c)$ is required to be a non-negative integer. The interval defined as $[L^j_e,U^j_e]$ is the time interval during which flight $j$ will possibly be rerouted to $b_{e,v_c}$ if $v_c$ is closed, where the lower bound $L^j_e$ is the earliest time that the flight may leave the previous node $\tau(e)$, and, if $\sigma(e)$ is not disabled, the upper bound $U^j_e$ is the latest time that the flight may leave the head node while, in the case that $\sigma(e)$ is disabled, the upper bound $U^j_e$ for the time interval that the flight may be rerouted to the backup node $\tau(e)$ will be the latest time that the corresponding flight may arrive at node $v_c$, since otherwise it will continue its normal operation without rerouting. For any $e\in \mc E$, if $\sigma(e) \neq v_c$, then the possibly affected flights traveling on the link will land at its head node $\sigma(e)$, and thus the number of possibly affected flights rerouting to node $\sigma(e)$ from link $e$, $N_{e,\sigma(e)}(t_c,v_c)$ is deterministic, which can be simply counted as in~\eqref{eq:N_e_fix}. The constraint~\eqref{eq:N_e_0} prevents flights from proceeding to any node $v$ not in the set of possible backup nodes for link $e$ when node $v_c$ is disabled, $\mc B_{e,v_c}$.

As a safety requirement, when $v_c$ is disabled at time $t_c$, any possibly affected flight needs to be rerouted to a node. Consider a fixed $e \in \mc E_{v_c}$, a flight whose possibly traveling on this link at time $t_c$ is obviously a possibly affected flight when node $v_c$ is disabled at time $t_c$ and needs to be rerouted to one of its backup nodes. Therefore, \eqref{eq:sufficient_e} is the link safety constraint depicting that all flights possibly traveling on $e$ at $t_c$ needs to be rerouted to one of the possible backup nodes for link $e$ when $v_c$ is disabled. Notice that, supposing the backup nodes of the link $e$ include a node $v'\leq v_c$ that is along the route of the flight, and the flight is also possibly traveling on a link whose head node is $v'$ at $t_c$, then this means a landing spot at node $v'$ has to be reserved, and we do not need to prepare another one. This situation is reflected through $\{\cup_{\ell \in \mc E^j_{v_c}}[L^j_{\ell^{R_j}}, U^j_{\ell^{R_j}}]\}$ in~ \eqref{eq:sufficient_e}. Finally, $N_R(v,t_c,v_c)$ is the maximum number of flights not possibly affected that may park at node $v$ at any time once $v_c$ is disabled at $t_c$, and the summation $\sum_{e\in \mc B_{v,v_c}} N_{e,v}(t_c,v_c)$ is the total number of possibly affected flights rerouting to node $v$. Therefore~\eqref{eq:sufficient_v} is a necessary and sufficient condition to avoid the capacity conflict between the rerouted flights and those surely not affected for all realization of link travel times.
\end{proof}

Theorem~\ref{thm:sufficient} provides a finite number of conditions to verify a schedule is worst-case node conditionally safe for node~$v_c$.
Furthermore, by checking that a schedule is worst-case node conditionally safe for all $v_c \in \mc V$, we can conclude the 1-closure safety. However, we notice that looking for the existence of an integer set satisfying the constraints \eqref{eq:sufficient_v}--\eqref{eq:N_e_var} in Theorem~\ref{thm:sufficient}leads to a Mixed Integer Linear Programming (MILP) Problem, which is sensitive to scale and can be time-consuming once the size of the schedule under verification grows. We  recast the MILP as a linear program in Section~\ref{sec:simplification}, leading to an efficient safety-verification algorithm. In the following subsection, we explore the safety constraints for a given UAM schedule in the best-case scenario.

\subsection{Necessary and Sufficient Condition for Best-Case Safe Schedules}
\label{sec:best-case}
Theorem~\ref{thm:sufficient} provides a set of constraints that serve as a necessary and sufficient condition for a feasible schedule to be worst-case time-node conditionally, node conditionally or 1-closure safe. 
In this subsection, we provide constraints for a feasible schedule to be best-case safe. In the best-case scenario, 
we  consider the realization with the least number of rerouting flights and most flexible rerouting plan needed among all possible realizations. Therefore, we assume that all flights possible to have arrived at or passed through the closed node $v_c$ have already arrived or left by the time of node failure. 

We denote the index set of the \emph{definitely affected} flights as
\begin{equation}
    \mc J^m_{v_c}(t_c) = \{ j\in \mc J^*_{v_c}\backslash \mc J_{c,v_c}(t_c) \mid \inf\{\mc M^j_{v_c}\} \geq t_c \} \,.
\end{equation}
The definitely affected flights are the flights that must be rerouted under any possible realization.

\begin{thm}
\label{thm:best-case}
Consider a network $\mc N = (\mc G,C,\mc R,\ul{x}, \ol{x},w)$, where $\mc G =  (\mc V, \mc E)$ and given backup nodes assignment $\mc B_e$ for all $e\in \mc E$. A given feasible schedule $\mc S =\{(R_j,\delta_j)\}_{j\in \mc J}$ is best-case time-node conditionally safe for node $v_c$ and time $t_c$ if and only if there exists a non-negative integer set $\{N_{j,v}(t_c,v_c)\}_{j \in  \mc J^m_{v_c}(t_c), v \in \mc V}$ that satisfies the following constraints for all $j \in  \mc J^m_{v_c}(t_c)$ and $v \in \mc V$:
\begin{align}
\label{eq:best_v}    C_v - \sum_{j\in \mc J^m_{v_c}(t_c)}N_{j,v}(t_c,v_c) &\geq N_R(v,t_c,v_c) \,, \quad \forall v\in \mc V \,,\\
\label{eq:best_sum}    \sum_{v\in\mc V}N_{j,v}(t_c,v_c) &= 1 \,, \quad \forall j \in \mc J^m_{v_c}(t_c) \,, \\
\nonumber    0 \leq N_{j,v}(t_c,v_c) &\leq \max_{e\in \mc R_j} \mathbf{1}(t_c;[L_e^j,\hat{U}_e^j])\cdot \mathbf{1}(v;\mc B_{e,v_c}) \,, \\ 
\label{eq:best_binary}    & \qquad \forall v \in \mc V \,,  j \in \mc J^m_{v_c}(t_c)\,,
\end{align}
where $L^j_e$ is defined in \eqref{eq:lowerI} and
\begin{align}
  \label{eq:upperI_best}
  \hat{U}^j_{e} = \begin{cases}
                  \sup\{\mc M^j_{\sigma(e)}\} &\text{if }\sigma(e) \neq v_c \,,\\
                  \inf\{\mc M^j_{\sigma(e)}\} & \text{if } \sigma(e) = v_c \, .
\end{cases}
\end{align}

Further, $\mc S$ is best node-conditionally safe for node $v_c$ if and only if the set $\{N_{j,v}(t_c,v_c)\}_{j \in \mc J,v \in \mc V}$ that satisfies \eqref{eq:best_v}--\eqref{eq:best_binary} exists and the conditions holds for the finite number of times $t_c$ where the values of $N_R(v,t_c,v_c)$ and the time-varying index sets $\mc J^m_{v_c}(t_c)$, $\mc J_{c,v_c}(t_c)$ possibly change, i.e., at both endpoints of the interval $\mc M_v^j$ for all $v \in \mc V$, $\delta_j$ for all $j \in \mc J$ and at times $L^j_{e}$, $\hat{U}^j_{e}$ for all $j \in \mc J$ and $e\in \mc E$.
\end{thm}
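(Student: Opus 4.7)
The plan is to mirror the structure of the proof of Theorem~\ref{thm:sufficient}, but adapted to the best-case setting in which we may choose both the realization of link travel times and the backup-node assignment. The key conceptual shift is that the variables $N_{j,v}(t_c,v_c)$ are now indexed by flight and node, and should be interpreted as indicator variables that a definitely affected flight $j \in \mc J^m_{v_c}(t_c)$ is assigned to backup node $v$. The integrality together with constraint \eqref{eq:best_binary} forces each $N_{j,v}$ into $\{0,1\}$, and \eqref{eq:best_sum} then says exactly one node is chosen per definitely affected flight.

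For the sufficiency direction I would take an integer set satisfying \eqref{eq:best_v}--\eqref{eq:best_binary} and construct a witnessing realization together with a rerouting plan. Constraint \eqref{eq:best_binary} guarantees that whenever $N_{j,v}=1$ there exists a link $e$ on $R_j$ with $v \in \mc B_{e,v_c}$ such that some realization has flight $j$ on link $e$ at time $t_c$; the use of $\hat{U}^j_e=\inf\{\mc M^j_{v_c}\}$ when $\sigma(e)=v_c$ correctly encodes the best-case assumption that $j$ reaches $v_c$ as early as possible. For any possibly affected flight outside $\mc J^m_{v_c}(t_c)$, one can pick a realization in which the flight has already transited $v_c$ before $t_c$, rendering it unaffected. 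Constraint \eqref{eq:best_v} then ensures that the rerouted flights together with the worst-case occupancy $N_R(v,t_c,v_c)$ by the unaffected flights do not exceed capacity at any node $v$.

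For necessity I would start from a realization that witnesses best-case time-node conditional safety and extract the $N_{j,v}$. In any such realization, every $j\in\mc J^m_{v_c}(t_c)$ must still be rerouted, since by definition the earliest possible arrival of $j$ at $v_c$ exceeds $t_c$. Setting $N_{j,v}=1$ iff $j$ is assigned to $v$ in the realization, \eqref{eq:best_sum} is immediate, \eqref{eq:best_binary} follows from the rerouting rules applied to whichever link $j$ occupies at $t_c$ together with the best-case travel times, and \eqref{eq:best_v} follows from the node capacity constraint. The statement on node-conditional safety is then a piecewise-constancy argument identical to the one used for Theorem~\ref{thm:sufficient}: the quantities $N_R(v,t_c,v_c)$, $\mc J^m_{v_c}(t_c)$, and $\mc J_{c,v_c}(t_c)$ are piecewise constant in $t_c$ with breakpoints only at the enumerated times.

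I expect the main obstacle to be justifying the best-case semantics in the necessity direction, specifically why it suffices to track only the definitely affected flights $\mc J^m_{v_c}(t_c)$ rather than the larger set $\mc J^{*\backslash c}_{v_c}(t_c)$ appearing in Theorem~\ref{thm:sufficient}. The resolution is that for any flight $j \in \mc J^*_{v_c}(t_c)\setminus \mc J^m_{v_c}(t_c)$ there exists a realization in which $j$ has already landed at and departed from $v_c$ before time $t_c$, and such a realization may be chosen independently for each such flight; the only flights that force a nontrivial rerouting decision in some best-case realization are precisely those in $\mc J^m_{v_c}(t_c)$. Care must also be taken to verify that the per-flight choices of realization used to render non-definitely-affected flights unaffected can be made consistently with the realization needed to place a definitely affected flight on the link $e$ underlying \eqref{eq:best_binary}, which follows because these choices concern disjoint flights and link travel times are independent across flights.
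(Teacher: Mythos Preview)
Your proposal is correct and follows essentially the same approach as the paper's own proof: both interpret $N_{j,v}(t_c,v_c)$ as a binary indicator that definitely affected flight $j$ is rerouted to node $v$, argue that flights in $\mc J^*_{v_c}(t_c)\setminus\mc J^m_{v_c}(t_c)$ can be made unaffected by a suitable realization, and then read off \eqref{eq:best_v}--\eqref{eq:best_binary} as the capacity, assignment, and admissibility constraints respectively. If anything, your treatment is more explicit than the paper's---you separate the sufficiency and necessity directions and address the consistency of per-flight realization choices, points the paper leaves implicit in its one-paragraph sketch.
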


\begin{proof}
The proof of Theorem~\ref{thm:best-case} applies the similar logic as in Theorem \ref{thm:sufficient} to the best-case scenario, while from the perspective of flights instead of the links. First of all, we can focus only on the definitely affected flights, since any flight that is possibly affected but not definitely affected is either canceled or has at least a realization of travel time such that the flight has already passed through or landed at node $v_c$ and does not need to be rerouted. 

We regard $N_{j,v}(t_c, v_c)$ as the indicator of $j$'th flight to be rerouted to node $v$ if node $v_c$ is disabled at time $t_c$, for all $j\in \mc J$ and $v\in \mc V$. As a result, the non-negative variable $N_{j,v}(t_c, v_c)$ is actually binary. We enforce this binary condition in \eqref{eq:best_binary}, where $N_{j,v}(t_c, v_c) \leq 1$ if there exists $e \in R_j$ such that $v$ is one of its possibly backup nodes when $v_c$ is closed and the flight is definitely affected and possibly traveling on the link $e$ at time $t_c$ and $N_{j,v}(t_c, v_c) = 0$ otherwise. Notice that the upper-bound of the time interval for the flight to travel through link $e$ and its head node and be rerouted to one of its possible backup nodes, $\hat{U}^j_e$, is adjusted comparing to $U^j_e$ defined in \eqref{eq:upperI} to include only the definitely affected flights. 
Once node $v_c$ is disabled at time $t_c$, a flight will actually be reroute to exactly one node, which is depicted in \eqref{eq:best_sum}. Moreover, \eqref{eq:best_v} is the capacity constraint, where $\sum_{j\in \mc J^m_{v_c}}N_{j,v}(t_c,v_c)$ is the number of definitely affected flights rerouted to node $v$.

If we are not able to find a set of non-negative integers $\{N_v\}_{v\in \mc V}$ that satisfies \eqref{eq:best_v}--\eqref{eq:best_binary}, then there must exist a conflict of occupation at one or more nodes once $v_c$ is closed at time $t_c$, and hence \eqref{eq:best_v}--\eqref{eq:best_binary} are sufficient and necessary conditions for the set of schedules to be best-case time-node conditionally safe for node $v_c$ and time $t_c$. 
\end{proof}

Theorem~\ref{thm:sufficient} is both sufficient and necessary for worst-case 1-closure safety, while Theorem~\ref{thm:best-case} is sufficient and necessary for best-case 1-closure safety. Since worst-case safety implies best-case safety, satisfaction of the conditions in Theorem~\ref{thm:sufficient} implies satisfaction of the conditions in Theorem~\ref{thm:best-case}.

\section{Simplification for Verification}
\label{sec:simplification}
The necessary and sufficient conditions for safety derived in Section \ref{sec:thms} involve integer constraints and therefore are inefficient for use in a direct numerical implementation. In this section, we show that these conditions can in fact be translated to efficient linear programming (LP) constraints.  
We first establish a lemma explaining the mathematical foundation for our simplification of Theorem~\ref{thm:sufficient} and~\ref{thm:best-case}, followed by a theorem that turns the MILP problem in Theorem~\ref{thm:sufficient} and~\ref{thm:best-case} into an LP problem. In particular, the following lemma shows that, for a special set of constraints on a set of variables, the existence of a solution over the real numbers induce the existence of a solution over the integers.

\begin{lemma}
\label{lem:simplify}
Given a set $L = \{(l_1,l_2) \in \mathbb{N}_{>0}^2\mid l_1\leq N_1, l_2 \leq N_2 \}$ for some positive integers $N_1, N_2$, and let $L_{var}$ be a subset of $L$. Let $\alpha_{l_1}$ (resp., $\beta_{l_2}$) be non-negative integers for all $l_1 = 1,\dots, N_1$ (resp., $l_2= 1, \dots,N_2$), and $\gamma_{l_1,l_2}$ be integers for all $(l_1,l_2) \in L_{var}$. If there exists a set of real numbers $\{n_{l_1,l_2}\}_{(l_1,l_2)\in L}$ that satisfies
\begin{align}
    \label{eq:simp1}    \sum_{l_2=1}^{N_2} n_{l_1,l_2} &= \alpha_{l_1} \,, & \forall& l_1 = 1,\dots, N_1 \,, \\
    \label{eq:simp2}    \sum_{l_1=1}^{N_1} n_{l_1,l_2} &\leq \beta_{l_2} \,,  &  \forall& l_2=1,\dots, N_2  \,, \\
    \label{eq:simp3}    n_{l_1,l_2} &= \gamma_{l_1,l_2} \,,  &   \forall& (l_1,l_2) \in L_{var} \,, \\
    \label{eq:simp4}    n_{l_1,l_2} &\geq 0  \,, & \forall& (l_1,l_2) \in L \,,
\end{align}
then there exists a set of integers $\{n'_{l_1,l_2}\}_{(l_1,l_2)\in L}$ also satisfying \eqref{eq:simp1}--\eqref{eq:simp4}.
\end{lemma}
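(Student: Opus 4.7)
The plan is to reduce the system \eqref{eq:simp1}--\eqref{eq:simp4} to a standard transportation-style linear program and then invoke the classical total-unimodularity machinery for such polyhedra.

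First, I would eliminate the pinned coordinates by direct substitution: for each $(l_1,l_2)\in L_{var}$ set $n'_{l_1,l_2}=\gamma_{l_1,l_2}$ (note that feasibility of the real system together with \eqref{eq:simp4} forces $\gamma_{l_1,l_2}\ge 0$, so these values are already nonnegative integers). Absorbing them into the right-hand sides of \eqref{eq:simp1} and \eqref{eq:simp2} yields the reduced quantities
\[
\alpha'_{l_1}\;=\;\alpha_{l_1}-\!\!\sum_{l_2:\,(l_1,l_2)\in L_{var}}\!\!\gamma_{l_1,l_2},\qquad
\beta'_{l_2}\;=\;\beta_{l_2}-\!\!\sum_{l_1:\,(l_1,l_2)\in L_{var}}\!\!\gamma_{l_1,l_2},
\]
which are again integers. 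The remaining unknowns $\{n_{l_1,l_2}\}_{(l_1,l_2)\in L\setminus L_{var}}$ satisfy row-sum equalities $\sum_{l_2} n_{l_1,l_2}=\alpha'_{l_1}$, column-sum inequalities $\sum_{l_1} n_{l_1,l_2}\le\beta'_{l_2}$, and nonnegativity. Introducing a slack $s_{l_2}\ge 0$ in each column inequality converts the system into an equality-form transportation problem with integer right-hand side.

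Next I would verify and invoke total unimodularity. The coefficient matrix of the reduced equality system is, up to the appended unit columns for the slacks, the vertex--edge incidence matrix of a bipartite graph (rows split between ``$l_1$-nodes'' and ``$l_2$-nodes''; the variable $n_{l_1,l_2}$ contributes a $+1$ in exactly one row of each block). Bipartite incidence matrices are the prototypical examples of totally unimodular matrices, and appending identity columns preserves total unimodularity. By the Hoffman--Kruskal theorem, every vertex of a nonempty polyhedron $\{x\ge 0:\,Ax=b\}$ with $A$ totally unimodular and $b$ integral is itself integral. The polyhedron here is bounded (each $n_{l_1,l_2}$ is capped above by the relevant $\alpha'_{l_1}$), hence a polytope, so all of its vertices are integer whenever it is nonempty.

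Finally, the hypothesized real solution shows the polytope is nonempty, so it admits an integral vertex $\{n'_{l_1,l_2}\}_{(l_1,l_2)\in L\setminus L_{var}}$. Combined with the already-integral values on $L_{var}$, this produces the desired integer solution to \eqref{eq:simp1}--\eqref{eq:simp4}. The only delicate step I anticipate is being explicit about the reformulation with slacks and confirming that the resulting matrix really is the bipartite incidence matrix (rather than merely TU-looking); once that identification is made, every other piece — integrality of the modified right-hand side, boundedness of the polytope, and the Hoffman--Kruskal conclusion — is standard.
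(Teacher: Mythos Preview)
Your proposal is correct and follows essentially the same route as the paper: reduce the fixed coordinates out of the right-hand side, recognize the remaining constraint matrix as a bipartite/transportation incidence matrix (hence totally unimodular), and apply Hoffman--Kruskal to obtain an integral feasible point. The only cosmetic differences are that the paper keeps the full variable vector and zeros the relevant columns rather than eliminating variables, and it writes the equalities as a pair of opposite inequalities rather than introducing slacks; the TU verification and the concluding appeal to \cite{hoffman2010integral} are identical in spirit.
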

The proof for Lemma~\ref{lem:simplify} can be found in Appendix~\ref{appendix:lemma}. Further, the remark below can be shown with some trivial revisions to the proof.

\begin{remark}
\label{remark:simplify}
Lemma~\ref{lem:simplify} holds if $n_{l_1,l_2}$ is bounded from above by an integer, i.e.,
\eqref{eq:simp4} is changed to $0 \leq n_{l_1,l_2} \leq U_{l_1,l_2}$ for all $(l_1,l_2) \in L$ for some integer number $U_{l_1,l_2}>0$.
\end{remark}

The simplified corollary below makes use of Lemma~\ref{lem:simplify} above and provides an LP alternative to the MILP problem in Theorem ~\ref{thm:sufficient}.

\begin{cor}
\label{cor:suff2}
Consider a network $\mc N = (\mc G,C,\mc R,\ul{x}, \ol{x},w)$, where $\mc G =  (\mc V, \mc E)$. Assume given a feasible schedule $\mc S = \{(R_j, \delta_j)\}_{j\in \mc J}$. There exists a set of real numbers $\{N_{e,v}(t_c,v_c)\}_{e \in \mc E,v \in \mc V}$ that satisfies the constraints \eqref{eq:sufficient_v}--\eqref{eq:N_e_var} if and only if there exists a set of integers $\{N'_{e,v}(t_c,v_c)\}_{e \in \mc E,v \in \mc V}$ that satisfies the constraints. 
\end{cor}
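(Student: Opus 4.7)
The plan is to obtain Corollary~\ref{cor:suff2} as a direct application of Lemma~\ref{lem:simplify}, since the ``integer implies real'' direction is immediate and only the converse requires argument. I would identify the index $l_1$ with a link $e \in \mc E$ and $l_2$ with a node $v \in \mc V$ via any enumeration, so that the two-index unknowns $n_{l_1,l_2}$ of the lemma correspond to the unknowns $N_{e,v}(t_c,v_c)$ of the corollary. Under this identification, the row-sum equalities \eqref{eq:simp1} will be realized by \eqref{eq:sufficient_e} (for rows $e \in \mc E_{v_c}$) together with \eqref{eq:N_e_fix} (for rows $e \notin \mc E_{v_c}$, where in addition the entries $N_{e,v}$ for $v \neq \sigma(e)$ are implicitly zero and can be treated as fixed); the column-sum inequalities \eqref{eq:simp2} will be realized by \eqref{eq:sufficient_v} after rearrangement; the fixed-value constraints \eqref{eq:simp3} will absorb the zero assignments in \eqref{eq:N_e_0} together with the implicit zeros; and the non-negativity \eqref{eq:simp4} is exactly \eqref{eq:N_e_var}.

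Next, I would verify the integrality hypotheses on the constants $\alpha_{l_1}, \beta_{l_2}, \gamma_{l_1,l_2}$ required by Lemma~\ref{lem:simplify}. The right-hand side of \eqref{eq:sufficient_e} is a sum of indicator values over the finite flight index set $\mc J^{*\backslash c}_{v_c}(t_c)$ and is therefore a non-negative integer, so it plays the role of $\alpha_e$; the same argument applies to the count in \eqref{eq:N_e_fix}. The column bound in \eqref{eq:sufficient_v} rearranges to $\sum_{e} N_{e,v}\leq C_v - N_R(v,t_c,v_c)$, and since $C_v \in \mathbb{N}_0$ while $N_R(v,t_c,v_c)$ is a supremum of sums of indicator functions taking values in $\mathbb{N}_0$, this $\beta_v$ is a non-negative integer. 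The fixed values $\gamma_{e,v}=0$ from \eqref{eq:N_e_0} are trivially integer.

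A small subtlety I would address explicitly is that the sum in \eqref{eq:sufficient_v} runs only over $e \in \mc B_{v,v_c}$ rather than over all of $\mc E$, whereas \eqref{eq:simp2} is phrased as a full column sum; this is harmless because \eqref{eq:N_e_0} combined with the implicit zero assignments for $e \notin \mc E_{v_c}$ with $v \neq \sigma(e)$ forces $N_{e,v}=0$ outside $\mc B_{v,v_c}$, so the two forms are equivalent. With the correspondence set up and all right-hand sides shown to be integers, Lemma~\ref{lem:simplify} applied to the given real-valued solution produces an integer-valued solution satisfying the identical system, proving the claim.

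The main obstacle is essentially bookkeeping rather than mathematics: carefully partitioning the variables into the ``fixed'' group and the ``free'' group so that the system maps cleanly onto the template of Lemma~\ref{lem:simplify}, and confirming that the two conventions for indexing---links versus rows, nodes versus columns, and restricted versus full sums---are mutually consistent. Once this translation is set up, no additional work beyond invoking the lemma is needed.
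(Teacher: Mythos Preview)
Your proposal is correct and follows essentially the same approach as the paper's proof: identify $l_1\leftrightarrow e$, $l_2\leftrightarrow v$, absorb the determined and implicitly-zero entries into $L_{var}$, extend the partial sums in \eqref{eq:sufficient_v} and \eqref{eq:sufficient_e} to full row/column sums, verify that all right-hand sides are integers, and invoke Lemma~\ref{lem:simplify}. The paper carries out the same translation, only it phrases the ``partial-to-full sum'' step by explicitly adding the fixed terms $\sum_{e\notin\mc B_{v,v_c}}N_{e,v}$ and $\sum_{v\notin\mc B_{e,v_c}}N_{e,v}$ to both sides (its equations \eqref{eq:reform_1}--\eqref{eq:reform_2}) rather than arguing they vanish; the content is identical. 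One small point: your sentence asserting that $\beta_v=C_v-N_R(v,t_c,v_c)$ is a \emph{non-negative} integer does not actually justify the sign (integrality of each piece does not give $C_v\geq N_R$); feasibility of $\mc S$ is what guarantees $N_R\leq C_v$, so you should cite that explicitly.
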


\begin{proof}
We first show that the conditions \eqref{eq:sufficient_v} and \eqref{eq:sufficient_e} conform to the form in Lemma \ref{lem:simplify}. 
For the sake of convenience, we fix $t_c$ and $v_c$ and drop the notation from $N_{e,v}(t_c,v_c)$ and $N_R(v,t_c,v_c)$, i.e., we write them as $N_{e,v}$ and $N_R(v)$ in this proof.

We can observe that, by the definition of $\mc B_{v,v_c}$ in \eqref{eq:backup2v}, assume $e \notin \mc B_{v,v_c}$, if $\sigma(e) = v_c$, then $v \notin \mc B_e\backslash v_c$, otherwise $\sigma(e) \neq v_c$. We can therefore conclude from \eqref{eq:N_e_fix} and \eqref{eq:N_e_0} that $N_{e,v}$ is a fixed number that can be computed if $e \notin \mc B_{v,v_c}$.

By adding the fixed terms of $N_{e,v}$ for $e \notin \mc B_{v,v_c}$ to both sides of \eqref{eq:sufficient_v} we can then obtain that, for all $v\in \mc V$, 
\begin{align}
\nonumber \sum_{e\in \mc B_{v,v_c}} N_{e,v} &\leq  C_v - N_R(v)\\
\label{eq:reform_1} \sum_{e\in \mc E} N_{e,v} &\leq  C_v -  N_R(v)+ \sum_{e\notin \mc B_{v,v_c}} N_{e,v}\,. 
\end{align}

Similarly, if $v\notin \mc B_{e, v_c}$, then $N_{e,v}$ is a fixed number. By adding $\sum_{v \notin \mc{B}_{e, v_c}} N_{e,v}$ to both sides of \eqref{eq:sufficient_e}, we have
\begin{align}
\nonumber   &\sum_{v \in \mc V} N_{e,v} = \sum_{v \notin \mc{B}_{e, v_c}} N_{e,v}+\\ 
\label{eq:reform_2}&\sum_{j \in \mc J^{*\backslash c}_{v_c}(t_c)}\mathbf{1}\Big(t_c;[L^j_{e}, U^j_{e}]  \backslash \{\cup_{\ell \in \mc E^j_{v_c}}[L^j_{\ell^{R_j}}, U^j_{\ell^{R_j}}]\}\Big) \, .
\end{align}

We then let $L = \{(e,v) \mid e\in \mc E, v \in \mc V \}$, and $L_{var} = L \backslash \{(e,v) \mid \sigma(e) = v_c, v\in \mc B_{e, v_c}\}$. Then we can combine and rewrite \eqref{eq:N_e_fix}--\eqref{eq:N_e_var} as 
\begin{align}
    \label{eq:reform_3} N_{e,v}  &= \gamma_{e,v} \,, \quad & \forall  (e,v) & \in L_{var} \,, \\
    \label{eq:reform_4} N_{e,v} &\geq 0 \,, \quad & \forall  (e,v) &\in L \, ,
\end{align}
where $\gamma_{e,v} = \sum_{j\in \mc J^*_{v_c}(t_c)\backslash \mc J_{c}(t_c)}\mathbf{1}(t_c;[L^j_{e}, U^j_{e}])$ if $\sigma(e) = v \neq v_c$ and $\gamma_{e,v} = 0$ if $\sigma(e) \neq v$ and $v \notin \mc B_{e, v_c}$.

As $\mc E$ and $\mc V$ are both finite sets, while the right sides of the inequalities \eqref{eq:reform_1} and \eqref{eq:reform_2} are integers, the conditions \eqref{eq:reform_1}--\eqref{eq:reform_4} exactly follows the conditions \eqref{eq:simp1}--\eqref{eq:simp4} in Lemma~\ref{lem:simplify}. Therefore, by Lemma~\ref{lem:simplify}, there exists a integral set $\{N_{e,v}\}_{e \in \mc E, v\in \mc V}$. 

As a result, given the schedule $\mc S$, if there exists a set of real numbers $\{N_{e,v}(t_c,v_c)\}_{e \in \mc E, v\in \mc V}$ that satisfies the constraints~\eqref{eq:sufficient_v}--\eqref{eq:N_e_var}, then there exist a set of integers $\{N'_{e,v}(t_c,v_c)\}_{e \in \mc E, v\in \mc V}$ that satisfies the constraints. 
The other direction of the corollary is immediate. 
\end{proof}

Combining Theorem~\ref{thm:sufficient} and Corollary~\ref{cor:suff2}, we are then able to verify 1-closure safety of given feasible schedules by solving an LP.
Similarly, we develop a corollary for simplification of best-case safety mirroring Corollary~\ref{cor:suff2} given Remark~\ref{remark:simplify}. 

\begin{cor}
\label{cor:best_case2}
Consider a network $\mc N = (\mc G,C,\mc R,\ul{x}, \ol{x},w)$, where $\mc G =  (\mc V, \mc E)$. Assume given a feasible schedule $\mc S = \{(R_j, \delta_j)\}_{j\in \mc J}$. There exists a set of real numbers $\{N_{j,v}(t_c,v_c)\}_{ j\in \mc J^m_{v_c},v \in \mc V}$ that satisfies the constraints \eqref{eq:best_v}--\eqref{eq:best_binary} if and only if there exists a set of non-negative integers $\{N'_{j,v}(t_c,v_c)\}_{ j\in \mc J^m_{v_c},v \in \mc V}$ that satisfies the constraints. 
\end{cor}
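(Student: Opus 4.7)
The plan is to mirror the argument of Corollary~\ref{cor:suff2}, casting the constraints of Theorem~\ref{thm:best-case} into the exact template of Lemma~\ref{lem:simplify} (as strengthened by Remark~\ref{remark:simplify}), and then invoking the lemma to obtain an integer solution whenever a real-valued one exists. The reverse direction is trivial since any integer solution is \emph{a fortiori} a real solution.

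First, I would fix $t_c$ and $v_c$ and drop them from the notation, writing $N_{j,v}$ in place of $N_{j,v}(t_c,v_c)$ and $N_R(v)$ in place of $N_R(v,t_c,v_c)$. I would then match the index sets to those in Lemma~\ref{lem:simplify} by identifying $l_1$ with the flight index $j$ ranging over $\mc J^m_{v_c}(t_c)$ (a finite index set, which can be enumerated $1,\dots,N_1$) and $l_2$ with the node $v$ ranging over $\mc V$ (enumerated $1,\dots,N_2$). I would define the role of the three data as follows: $\alpha_j := 1$ for every $j$, matching the right-hand side of \eqref{eq:best_sum}; and $\beta_v := C_v - N_R(v)$, matching \eqref{eq:best_v} after rearrangement. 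These are both non-negative integers: $\alpha_j=1$ trivially, and $\beta_v\geq 0$ because $\mc S$ is a feasible schedule, so $N_R(v)\leq C_v$, and both $C_v$ and $N_R(v)$ are integers by construction.

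Next, I would handle the upper bound appearing in \eqref{eq:best_binary}. For each $(j,v)$ let
\begin{equation*}
U_{j,v} := \max_{e\in \mc R_j} \mathbf{1}(t_c;[L_e^j,\hat{U}_e^j])\cdot \mathbf{1}(v;\mc B_{e,v_c}) \in \{0,1\},
\end{equation*}
which is an integer in $\{0,1\}$. I would then split the index set $L=\{(j,v): j\in \mc J^m_{v_c}(t_c),\ v\in \mc V\}$ into $L_{var}:=\{(j,v)\in L : U_{j,v}=0\}$, on which \eqref{eq:best_binary} forces $N_{j,v}=0$ (so set $\gamma_{j,v}=0$), and its complement, on which \eqref{eq:best_binary} reduces to $0\le N_{j,v}\le 1$. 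With this assignment the constraints \eqref{eq:best_v}--\eqref{eq:best_binary} are exactly an instance of \eqref{eq:simp1}--\eqref{eq:simp4} together with the integer upper-bound extension recorded in Remark~\ref{remark:simplify}.

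Having made this identification, the rest is a direct application: if there exists a real-valued $\{N_{j,v}\}$ satisfying \eqref{eq:best_v}--\eqref{eq:best_binary}, then by Lemma~\ref{lem:simplify} (with Remark~\ref{remark:simplify}) there exists an integer-valued $\{N'_{j,v}\}$ satisfying the same constraints, and since \eqref{eq:best_binary} constrains each entry to lie in $[0,1]$ the integer solution is automatically in $\{0,1\}$, as needed. I expect the main obstacle to be bookkeeping rather than mathematical depth: one must carefully verify that $\beta_v$ and $\gamma_{j,v}$ are integers (which relies on integrality of $C_v$ and $N_R(v)$ and on $U_{j,v}\in\{0,1\}$), and that every coordinate of the constraint system lines up with a coordinate of Lemma~\ref{lem:simplify}. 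Since these are all immediate from the definitions, the corollary follows from the lemma with essentially the same reasoning used for Corollary~\ref{cor:suff2}.
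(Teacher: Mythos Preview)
Your proposal is correct and takes essentially the same approach as the paper: the paper's proof simply states that the result is immediate from Lemma~\ref{lem:simplify} and Remark~\ref{remark:simplify} by converting \eqref{eq:best_v}--\eqref{eq:best_binary} into the required form, and you have carried out exactly that conversion (identifying $l_1\leftrightarrow j$, $l_2\leftrightarrow v$, $\alpha_j=1$, $\beta_v=C_v-N_R(v)$, and using the integer upper bounds $U_{j,v}\in\{0,1\}$ from \eqref{eq:best_binary}). Your handling of $L_{var}$ for the pairs with $U_{j,v}=0$ is slightly redundant since the upper bound already forces those variables to zero, but it is not incorrect.
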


The proof for Corollary~\ref{cor:best_case2} is immediate from Lemma~\ref{lem:simplify} as we can easily convert the constraints \eqref{eq:best_v}--\eqref{eq:best_binary} to the same form as in Lemma~\ref{lem:simplify} and Remark~\ref{remark:simplify}.

\section{Case Study}
\label{sec:case_study}
In the case study, we demonstrate the verification algorithm based on Theorem~\ref{thm:sufficient} and Corollary~\ref{cor:suff2} on the UAM network in the  Example~\ref{ex:case} with 20 scheduled flights. 
We also demonstrate the efficient scaling of the algorithm on examples with up to 1,000 UAVs.

To ensure the worst-case node conditional safety when $v_c$ is closed, we check whether there exists a set of real numbers $\{N_{e,v}(t_c,v_c)\}_{e \in \mc E,v \in \mc V}$ that satisfies the constraints \eqref{eq:sufficient_v}--\eqref{eq:N_e_var} over the time interval $t_c \in [0,+\infty)$ so that the worst-case safety is guaranteed. As stated in Theorem \ref{thm:sufficient}, we only need to solve the LP feasibility problem at each point of time that any value may change, i.e., $L^j_e$, $U^j_e$, both ends of $\mc M^j_v$, and $\delta_j$ for any counted flight $j \in \mc J$ and link $e \in \mc E$ for some fixed node $v$, since the system of linear inequalities \eqref{eq:sufficient_v}--\eqref{eq:N_e_var} will not change between these points. 
We randomly generate a particular feasible schedule profile with $20$ flights and consider the constraints \eqref{eq:sufficient_v}--\eqref{eq:N_e_var} in Theorem~\ref{thm:sufficient} for time-node conditionally safe for node $v_c = v_5$ and any time $t_c >0$. The verification is implemented in MATLAB\footnote{The related MATLAB code can be found in \url{https://github.com/gtfactslab/Wei_TCNS_ScheduleVerification.git}.}.

In Fig.~\ref{fig:case_1}, we observe the 
worst-case landing-spot occupation at node $v_3$ (top) and the redistribution of flights on link $e_5$ and $e_6$ (middle and bottom) when node $v_5$ is disabled at any time $t_c$. For convenience, we simplify the notation $N_R(\cdot,t_c,v_5)$ and $N_{\cdot,\cdot}(t_c,v_5)$ as $N_R(\cdot)$ and $N_{\cdot,\cdot}$ in the figure.
The top graph of Fig.~\ref{fig:case_1} shows the distribution of UAVs that might possibly land at node $v_3$. The blue rectangles  correspond to flights that are not affected and continue to $v_3$ if node $v_5$ is disabled at time $t_c$, which is $N_R(v_3,t_c,v_5)$ in (\ref{eq:sufficient_v}); the pink rectangles correspond to flights rerouted to node $v_3$ from link $e_2 = (v_2,v_3)$ if $v_5$ is disabled at time $t_c$, which is $N_{e_2,v_3}(t_c,v_5)$; the orange (resp., green) rectangles correspond to the number of flights rerouted to node $v_3$ from link $e_5 = (v_3,v_5)$ (resp., $e_6 = (v_4,v_5)$) if node $v_5$ is disabled at time $t_c$, which is $N_{e_5,v_3}(t_c,v_5)$ (resp., $N_{e_6,v_3}(t_c,v_5)$). Notice that $N_{e_2,v_3}(t_c,v_5)$ is fixed and can be computed by \eqref{eq:N_e_fix}, since any flight traveling on $e_2$ at time $t_c$ has to land at $v_3$ if $v_5$ is disabled at that time; meanwhile, $N_{e_5,v_3}(t_c,v_5)$ (resp., $N_{e_6,v_3}(t_c,v_5)$) is an optimization variable computed through the LP problem~\eqref{eq:sufficient_v}--\eqref{eq:N_e_var}. However, it is possible that there does not exist a solution to the LP problem at certain time instances $t_c$, that is, the schedule is not worst-case time-node conditionally safe when node $v_5$ is disabled at time $t_c$. If that is the case, only the definite parts $N_R(v_3,t_c,v_5)$ and $N_{e_2,v_3}(t_c,v_5)$ (represented by blue and pink rectangles) are shown in the corresponding time interval, while the remaining parts are shown as a grey rectangle to demonstrate the failure. As a reference, the capacity $C_{v_3}=6$ is shown as the dotted, horizontal line so that the height of the entire bar (the sum of all rectangles) must not exceed the capacity for safety. 

The redistribution of flights on link $e_5$ (resp., $e_6$) shown in the middle (resp., bottom) graph of Fig.~\ref{fig:case_1} when node $v_5$ is disabled at any time $t_c$ provide a detailed partition of flights onto the nodes to which they are rerouted. Since the head of the link $e_5$ (resp., $e_6$), $v_5$, is disabled, the flights traveling on the link need to be rerouted to one of the possible backup nodes, $v_3$ or $v_4$ (resp., $v_3, v_4$, or $v_7$). We use orange and purple (resp., green, red, and blue) rectangles to represent $N_{e_5,v_3}(t_c,v_5)$ and $N_{e_5,v_4}(t_c,v_5)$ (resp., $N_{e_6,v_3}(t_c,v_5)$, $N_{e_6,v_4}(t_c,v_5)$, and $N_{e_6,v_7}(t_c,v_5)$), i.e., the number of affected flights traveling on link $e_5$ (resp., $e_6$) rerouted to the backup nodes $v_3$ and $v_5$ (resp., $v_3, v_4$, and $v_7$). Similar to the top graph of Fig.~\ref{fig:case_1}, we use grey rectangles to indicate the failure of obtaining the solution to the LP problem~\eqref{eq:sufficient_v}--\eqref{eq:N_e_var}. The height of the grey rectangles represents the total number of flights that need to be rerouted from link $e_5$ (resp., $e_6$) when node $v_5$ is disabled at time $t_c$, i.e., $\sum_{v \in \mc{B}_{e_5,v_5}} N_{e_5,v}(t_c,v_5)$. Notice that the solution to the LP problem ($N_{\cdot,\cdot}(t_c,v_5)$) for $t_c >0$, if it exists, is not unique, and hence Fig.~\ref{fig:case_1} is only one possible rerouting arrangement. Thus, as an example, the schedule in this case study is not time-node conditionally safe for node $v_5$ at $t_c = 40$, as the grey rectangle indicates there does not exist a solution to the problem~\eqref{eq:sufficient_v}--\eqref{eq:N_e_var} at time $t_c$. 
Therefore, the network is not able to accommodate the failure of $v_5$ at time $t_c = 40$. 

For the sake of comparison, we increase the capacity of $v_4$ to $C_{v_4} = 8$ while the other parts of the network remain the same. We then verify the safety of the same schedule with the algorithm, and these results are shown as in Fig.~\ref{fig:case_2}. As shown in the plots, after increasing the capacity of $v_4$, which is a backup node for both $e_5$ and $e_6$, the solution to the problem~\eqref{eq:sufficient_v}--\eqref{eq:N_e_var} exists all the time.
To conclude if the schedule is node conditionally safe when $v_5$ is disabled, we would need to observe all affected nodes and links in the network in the same way.

\begin{figure}[!t]
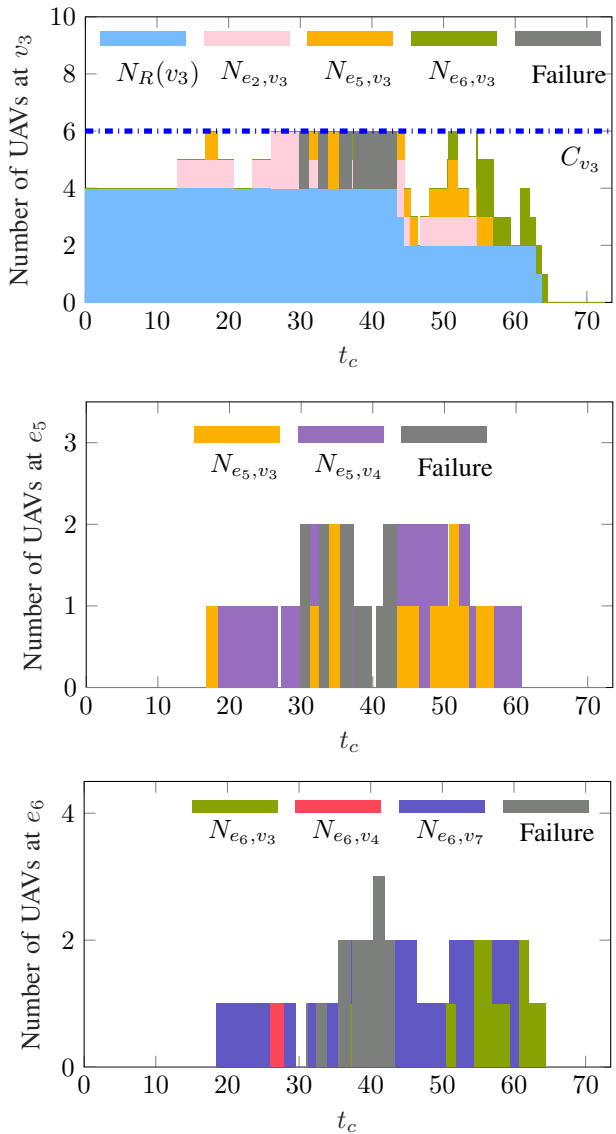

\centering
\input{plots/case_v3_v1.tikz}
\input{plots/case_e5_v1.tikz}
\input{plots/case_e6_v1.tikz}
\caption{Observation of the network when node $v_5$ is disabled at any time $t_c>0$. (Top) Expected landing-spot occupation at node $v_3$. (Middle) Redistribution of flights on link $e_5$ to its backup nodes. (Bottom) Redistribution of flights on link $e_6$ to its backup nodes. We simplify the notation $N_R(\cdot,t_c,v_5)$ and $N_{\cdot,\cdot}(t_c,v_5)$ as $N_R(\cdot)$ and $N_{\cdot,\cdot}$ in the figure.
}
\label{fig:case_1}
\end{figure}

\begin{figure}[!t]
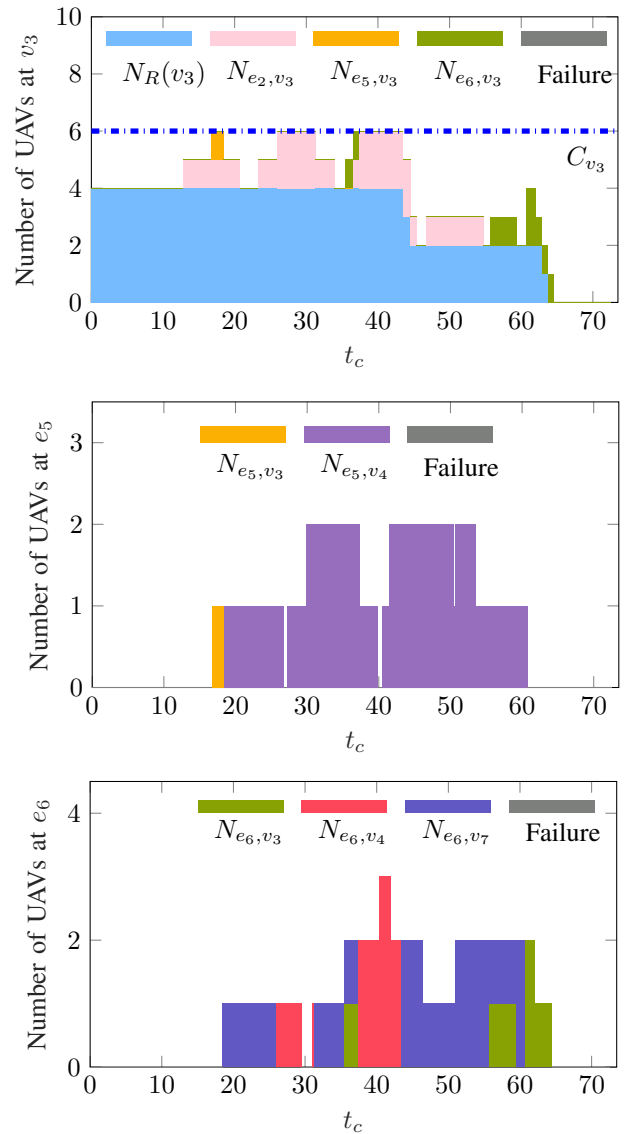

\centering
\input{plots/case_v3_v2.tikz}
\input{plots/case_e5_v2.tikz}
\input{plots/case_e6_v2.tikz}
\caption{Observation of the network when node $v_5$ is disabled at any time $t_c>0$ when we adjust the capacity of node $v_4$ to $C_{v_4} =8$. (Top) Expected landing-spot occupation at node $v_3$. (Middle) Redistribution of flights on link $e_5$ to its backup nodes. (Bottom) Redistribution of flights on link $e_6$ to its backup nodes. We simplify the notation $N_R(\cdot,t_c,v_5)$ and $N_{\cdot,\cdot}(t_c,v_5)$ as $N_R(\cdot)$ and $N_{\cdot,\cdot}$ in the figure. }
\label{fig:case_2}
\end{figure}

The computation time for $N_R(v,t_c,v_c)$ in (\ref{eq:sufficient_v}) increases quadratically with the size of the schedule, and as indicated in~\cite{megiddo1984linear}, solving the LP problem \eqref{eq:sufficient_v}--\eqref{eq:N_e_var} with a fixed number of variables can be computed within linear time with respect to the number of constraints, while the number of constraints in the LP problem and the number of times the LP needs to be solved  both grow linearly  with the size of schedule. We thus conclude that the verification process is completed in $O(n^2)$ time. This efficient scaling implies that we are  able to verify worst-case safety with large schedule profiles. As an example, consider increasing the capacity for each node of the network in Fig. \ref{fig:case_study_net} by $10$ to produce feasible schedules more easily. We generate $10$ more sets of random feasible schedules with sizes $100, 200, 300, \dots, 1000$ and verify their safety using the same algorithm. Fig.~\ref{fig:size_time} demonstrates the $O(n^2)$ computation complexity 
and shows that we are able to verify safety or demonstrate the safety failure for a schedule profile with 1,000 flights in under 50 seconds. As a baseline comparison, we also implement the verification algorithm with the naive MILP implied by Theorem \ref{thm:sufficient} without the efficient simplification to a LP derived in Section \ref{sec:simplification}. This implementation is solved using the Gurobi~\cite{gurobi} solver through with the YALMIP MATLAB toolbox~\cite{Lofberg2004}. We test the same 20-flight schedule on this MILP algorithm, which takes 7.34 seconds to verify, while the algorithm we use with simplification to LP takes only 1.43 seconds. A 100-flight schedule takes  around 40 seconds to verify with the naive MILP formulation, and 8 seconds with the LP algorithm.
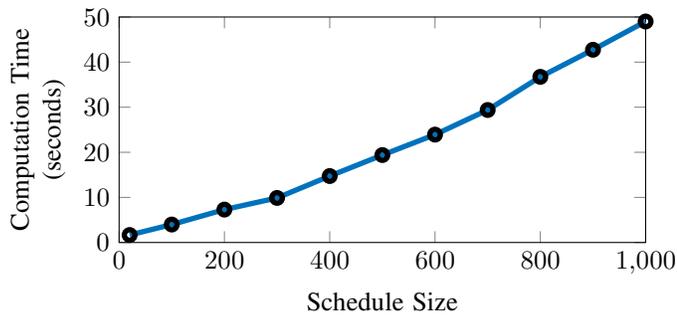
\begin{figure}[!t]
\centering
%
%
\definecolor{mycolor1}{rgb}{0.00000,0.44700,0.74100}%
\begin{tikzpicture}

\begin{axis}[%
width=7cm,
height=3cm,
at={(0.52in,0.42in)},
scale only axis,
xmin=0,
xmax=1000,
xlabel={Schedule Size},
ymin=0,
ymax=50,
ytick={0,10,...,50},
ylabel style={align=center}, ylabel=Computation Time \\ (seconds)
]
\addplot [color=mycolor1, line width=2.0pt, forget plot]
  table[row sep=crcr]{%
20	1.663871583\\
100	3.98503225\\
200	7.29867375\\
300	9.88211875\\
400	14.735578\\
500	19.392094\\
600	23.92268775\\
700	29.382114875\\
800	36.735090291\\
900	42.738392\\
1000	49.006763084\\
};
\addplot [color=black, line width=2.0pt, only marks, mark=o, mark options={solid, black}, forget plot]
  table[row sep=crcr]{%
20	1.663871583\\
100	3.98503225\\
200	7.29867375\\
300	9.88211875\\
400	14.735578\\
500	19.392094\\
600	23.92268775\\
700	29.382114875\\
800	36.735090291\\
900	42.738392\\
1000	49.006763084\\
};
\end{axis}
\end{tikzpicture}%
\caption{The computation time for verifying the worst-case safety of schedules with different sizes. We test on $11$ different sets of schedules with sizes from $20$ to $1000$. The data points demonstrates the $O(n^2)$ computational complexity.
}
\label{fig:size_time}
\end{figure}

\section{Conclusion}
We studied the safety verification problem for  Urban Air Mobility (UAM) schedules in the face of vertiport (i.e., landing site) closures. We adopt a UAM network model that considers a set of finite-capacity vertiports and links between vertiports with uncertain travel time. If a vertiport is closed at some time, then flights destined for the closed vertiport must be rerouted to one of a set of link-dependent backup nodes. A safety violation occurs if the finite landing capacity at any node is exceeded due to the rerouting.

We consider the travel time uncertainty as a nondeterministic uncertainty, and therefore, we define appropriate notions of worst-case and best-case safety. We give necessary and sufficient conditions in both cases. If a given schedule satisfies the conditions for worst-case safety, then it is guaranteed that the schedule will not violate the safety constraints under any possibility of the travel times. On the other hand, if a schedule does not satisfy the conditions for best-case safety, then even if the uncertainty were favorably eliminated from the travel times via, e.g., aggressive low-level motion planning and control schemes, safety violation would still occur, implying the need for a new schedule. 

As formulated, these conditions take the form of mixed integer linear programming (MILP) constraints. We then showed that these numerically inefficient MILP constraints are able to be converted into efficient linear programming (LP) constraints using the theory of totally unimodular matrices (TUMs), resulting in an efficient algorithm for safety verification. We demonstrated our approach through several examples and case studies. 


In this paper, we considered the scenario where only a single node is disabled. An extension of this work could consider multiple simultaneous node closures. 
In particular, in the event that 
the disabled nodes are all strongly connected and are disabled concurrently, simple modifications of the methodology proposed in this paper would apply. However, a more general setting is more challenging to formulate and address and is a possible direction for future work. 
Our modeling approach could further allow other generalizations. For example, we regard a disrupted node as completely malfunctioning, but a partial malfunctioning disruption model, where not all landing spots of the disrupted node are disabled, could also be investigated.

\bibliography{Library} 
\bibliographystyle{IEEEtran}

\begin{appendix}
\label{appendix:lemma}

\section{Proof of  Lemma~\ref{lem:simplify}}
We prove in this appendix. 
The proof makes use of properties of \emph{Totally Unimodular Matrices (TUMs)}.  
\begin{definition}
\label{def:TUM}
(Totally Unimodular Matrix) A matrix is \emph{totally unimodular} if every square submatrix has determinant $0, +1$, or $-1$.
\end{definition}
TUMs are widely used in the context of optimization problems. In particular, it can be shown that for a large class of linear programs defined via TUMs, the resulting optimal solution takes on integer values~\cite{hoffman2010integral}. We use this property in the proof of Lemma~\ref{lem:simplify} next.

\begin{proof}[Proof of Lemma~\ref{lem:simplify}]
We first let the vector $\vec{n}$ be the vectorized sequence $\{n_{l_1,l_2}\}_{l_1=1,l_2=1}^{l_1=N_1,l_2=N_2}$, so that 
\begin{equation}
    \vec{n} = [n_{1,1}, n_{1,2},\dots, n_{1,N_2},n_{2,1},  \dots, n_{N_1,1}, \dots,n_{N_1,N_2}]^T \, .    
\end{equation}
We can then simplify the constraint \eqref{eq:simp1}--\eqref{eq:simp2} as 
\begin{align}
    \label{eq:mat1} A_1 \vec{n} &= \vec{\alpha} \,,\\
    \label{eq:mat2} A_2 \vec{n} &\leq \vec{\beta} \, ,
\end{align}
where $\vec{\alpha} = [\alpha_1,\dots,\alpha_{N_1}]^T$ and $\vec{\beta} = [\beta_1,\dots,\beta_{N_2}]^T$
, and $A_1$ (resp., $A_2$) is a $N_1\times N_1 N_2$ (resp., $N_2\times N_1 N_2$) matrix that reflects the matrix form of the multiplication of the constraints. In particular, the row-$i$-column-$j$ element of $A_1$  is $A_1(i,j) = 1$ if $(i-1)N_2 < j \leq i N_2$ and $A_1(i,j) = 0$ otherwise, and $A_2(i,j)=1$ if $j= m \cdot N_2 +i$ for $m=0,1, \dots, N_1-1$ and  $A_2(i,j) = 0$ otherwise. 

Since for $(l_1,l_2) \in L_{var}$, $n_{l_1, l_2} = \gamma_{l_1,l_2}$, we can then subtract the corresponding entries from the left sides of~\eqref{eq:mat1} and~\eqref{eq:mat2}, and subtract the values from their right sides. 
We let $A_3$ (resp., $A_4$) be the resulting matrices, so that $A_3$ (resp., $A_4$) is a $N_2\times N_1 N_2$ (resp., $N_1\times N_1 N_2$) and the row-$i$-column-$j$ element of $A_3$  is $A_3(i,j) = 0$ if $(i,j-(i-1)N_2) \in L_{var}$ and $A_3(i,j) = A_1(i,j)$  otherwise, and $A_4(i,j)=0$ if $(i,(j-i)/N_2+1) \in L_{var} $ and $A_4(i,j) = A_2(i,j)$ otherwise.  Let
\begin{align}
    \alpha'_{l_1}& = \alpha_{l_1} - \sum_{l_2: (l_1,l_2)\in L_{var}} \gamma_{l_1,l_2}\quad \text{for all $l_1$},\\
  \beta'_{l_2} &= \beta_{l_2} - \sum_{l_1: (l_1,l_2) \in L_{var}} \gamma_{l_1,l_2}\quad \text{for all $l_2$},\\
  \vec{\alpha'} &= [\alpha'_1,\dots,\alpha'_{N_1}]^T,\\
  \vec{\beta'} &= [\beta'_1,\dots,\beta'_{N_2}]^T.
\end{align}
We can then reformulate~\eqref{eq:mat1} and~\eqref{eq:mat2} together with the constraints~\eqref{eq:simp3}--\eqref{eq:simp4} as 
\def\stackbelow#1#2{\underset{\displaystyle\overset{\displaystyle\shortparallel}{#2}}{#1}}

\begin{equation}
\label{eq:mat_total}
    \underbrace{\begin{bmatrix}
    &A_3\\ &-A_3\\ &A_4\\ &-I_{N_1 N_2}
    \end{bmatrix}}_{=:A} \vec{n} \leq 
    \underbrace{\begin{bmatrix}
    &\vec{\alpha'} \\ &-\vec{\alpha'}\\ &\vec{\beta'}\\ &\vec{0}_{N_1 N_2}
    \end{bmatrix}}_{=:\vec{b}} \,, 
\end{equation}
where $I_{N_1 N_2}$ is the identity matrix with $N_1 N_2$ rows and $\vec{0}_{N_1 N_2}$ is the zero vector of length $N_1 N_2$. 

The first part of the lemma is then turned into the standard linear programming problem, which is finding the existence of $\vec{n}$ that satisfies $A\vec{n} \leq \vec{b}$. The next step is to prove that $A$ is a totally unimodular matrix (TUM) as defined in Definition~\ref{def:TUM}.

We first consider the matrix $\begin{bmatrix} A_3 \\ A_4 \end{bmatrix}$. Notice that for each column of $A_3$ and $A_4$, there exists at most one nonzero entry, $1$, therefore, for each column of the matrix $\begin{bmatrix} A_3 \\ A_4 \end{bmatrix}$, there exists at most two nonzero entries, and for any column with two non-zero entries, both of them will be $1$, and the row of one is in $A_3$ while the other in $A_4$. According to Hoffman's sufficient conditions~\cite[Appendix]{heller1956extension}, $\begin{bmatrix} A_3 \\ A_4 \end{bmatrix}$ is a TUM. By the general rule of TUM, $-\begin{bmatrix} A_3 \\ A_4 \end{bmatrix}$ is a TUM and thus $\begin{bmatrix} A_3 \\ A_4\\-A_3\\-A_4 \end{bmatrix}$ is also a TUM. According to the definition of TUM, it is obvious that deleting some rows from a TUM will produce a TUM, as any square non-singular submatrix of the new matrix will still be unimodular. As a result,  $\begin{bmatrix} A_3 \\ A_4\\-A_3 \end{bmatrix}$ and  $-\begin{bmatrix} A_3 \\ A_4\\-A_3 \end{bmatrix}$ are TUM. By the general rule of TUM, $\begin{bmatrix} -A_3 \\ -A_4\\ A_3 \\ I_{N_1 N_2}\end{bmatrix}$ is TUM and $\begin{bmatrix} A_3 \\ A_4\\ -A_3 \\ -I_{N_1 N_2}\end{bmatrix}$ is also a TUM. As switching rows does not affect the absolute value of the determinant of a matrix, then we conclude from above that $A$ is a TUM. 

Therefore,~\cite[Theorem 2]{hoffman2010integral} implies that if there exists a solution for the LP in~\eqref{eq:mat_total}, then there exists a integral solution for the same LP problem, which concludes the lemma.
\end{proof}
\end{appendix}

\end{document}